\pgfplotsset{compat=newest}
\pgfplotsset{every axis/.style={scale only axis}}
\pgfplotsset{
  major grid style={thin,dotted},
  minor grid style={thin,dotted},
  ymajorgrids,
  yminorgrids,
  every axis/.append style={
    line width=0.7pt,
    tick style={
      line cap=round,
      thin,
      major tick length=4pt,
      minor tick length=2pt,
    },
  },
  legend cell align=left,
  legend style={
    line width=0.7pt,
    /tikz/every even column/.append style={column sep=3mm,black},
    /tikz/every odd column/.append style={black},
  },
  legend style={font=\small},
  title style={yshift=-2pt},
  enlarge x limits=0.15,
  every axis y label/.append style={yshift=-1ex},
  /pgf/number format/1000 sep={},
  axis lines*=left,
  xlabel near ticks,
  ylabel near ticks,
  label style={font=\footnotesize},       
  tick label style={font=\footnotesize},
}
\author{Ivor van der Hoog}{Technical University of Denmark, Denmark}{idjva@dtu.dk}{
https://orcid.org/0009-0006-2624-0231}{This project has received funding from the European Union's Horizon 2020 research and innovation programme under the Marie Sk\l{}odowska-Curie grant agreement No 899987}
\author{Lara Ost}{University of Vienna, Austria}{lara.ost@univie.ac.at}{
https://orcid.org/0000-0003-4311-9928}{Funded by the Vienna Graduate School on Computational Optimization (VGSCO), FWF-Project No.~W1260-N35}
\author{Eva Rotenberg}{Technical University of Denmark, Denmark}{erot@dtu.dk}{
https://orcid.org/0000-0001-5853-7909}{}
\author{Daniel Rutschmann}{Technical University of Denmark, Denmark}{daru@dtu.dk}{
https://orcid.org/0009-0005-6838-2628}{}
\begin{document}

\newcommand{\fancy}[1]{\EuScript{#1}}

\title{\Large Efficient Greedy Discrete Subtrajectory Clustering}
\titlerunning{Efficient Greedy Discrete Subtrajectory Clustering}

\newcommand{\frechet}{Fr\'{e}chet\xspace}
\newcommand{\Cov}{\texttt{\small cov}}
\newcommand{\fd}{\mathcal{D}_F}
\newcommand{\means}{\textnormal{\small mean}}
\newcommand{\centercost}{{\textnormal{\small cen}}}
\newcommand{\minrow}{\texttt{\small row}}
\newcommand{\maxcol}{\texttt{\small col}}
\newcommand{\pathlet}{cluster\xspace}
\newcommand{\pathlets}{clusters\xspace}
\newcommand{\cP}{{\fancy{P}}}
\newcommand{\cT}{\mathcal{T}}
\newcommand{\ra}[1]{\renewcommand{\arraystretch}{#1}}

\newtheorem{algodef}{Clustering}

\nolinenumbers
\keywords{Algorithms engineering, \frechet distance, subtrajectory clustering}

\bibliographystyle{plainurl}

\authorrunning{I.  van der Hoog, L. Ost, E. Rotenberg, and D. Rutschmann}

\ccsdesc[100]{Theory of computation~Computational Geometry} 

\date{}

\maketitle

\begin{abstract}
We cluster a set of trajectories $\mathcal{T}$ using subtrajectories of $\mathcal{T}$. We require for a clustering $C$ that any two subtrajectories $(\mathcal{T}[a, b], \mathcal{T}[c, d])$ in a cluster have disjoint intervals $[a,b]$ and $[c, d]$.
Clustering quality may be measured by the number of clusters, the number of vertices of $\mathcal{T}$ that are absent from the clustering, and by the Fr\'{e}chet distance between subtrajectories in a cluster. 

 A $\Delta$-cluster of $\mathcal{T}$ is a cluster ${\mathcal{P}}$ of subtrajectories  of $\mathcal{T}$ with a centre $P \in {\mathcal{P}}$, where all subtrajectories in ${\mathcal{P}}$ have Fr\'{e}chet distance at most $\Delta$ to $P$. Buchin, Buchin, Gudmundsson, L\"{o}ffler and Luo  present two $O(n^2 + n m \ell)$-time algorithms: \textbf{SC($\max$, $\ell$, $\Delta$, $\mathcal{T}$)} computes a \emph{single} $\Delta$-cluster where
 $P$ has at least $\ell$ vertices and maximises the cardinality $m$ of ${\mathcal{P}}$.
 \textbf{SC($m$, $\max$, $\Delta$, $\mathcal{T}$)} computes a \emph{single} $\Delta$-cluster where  ${\mathcal{P}}$ has cardinality $m$ and maximises the complexity $\ell$ of $P$. 
 
 In this paper, which is a mixture of algorithms engineering and theoretical insights, we use such maximum-cardinality clusters in a greedy clustering algorithm. 
We first provide an efficient implementation of \textbf{SC($\max$, $\ell$, $\Delta$, $\mathcal{T}$)} and \textbf{SC($m$, $\max$, $\Delta$, $\mathcal{T}$)} that significantly outperforms previous implementations. 
Next, we use these functions as a subroutine in a greedy clustering algorithm, which performs well when compared to existing subtrajectory clustering algorithms on real-world data.  
Finally, we observe that, for fixed $\Delta$ and $\mathcal{T}$, these two functions always output a point on the Pareto front of some bivariate function $\theta(\ell, m)$.
We design a new algorithm \textbf{PSC($\Delta$, $\mathcal{T}$)} that in $O( n^2 \log^4 n)$ time computes a $2$-approximation of this Pareto front. This yields a broader set of candidate clusters, with comparable quality to the output of the previous functions.
We show that using \textbf{PSC($\Delta$, $\mathcal{T}$)} as a subroutine improves the clustering quality and performance even further.
\end{abstract}

\clearpage 

\newpage

\section{Introduction}

Trajectories describe the movement of objects.
The moving objects are modeled as a sequence of locations.
Trajectory data often comes from GPS samples, which are being generated on an incredible scale through cars, cellphones, and other trackers.  
We focus on \emph{subtrajectory clustering} where clusters are sets of subtrajectories. The goal is to create few clusters and to ensure that subtrajectories within a cluster have low pairwise discrete \frechet distance. This algorithmic problem has been studied extensively~\cite{agarwal2018subtrajectory, bruning2022faster, buchin2020improved, buchin2011detecting,  buchin2013median, chen2013pathlet, conradi2023finding, gaffney1999trajectory, gudmundsson2012gpu, gudmundsson2022cubic, gudmundsson2012motifs, hung2015clustering, lee2007trajectory, sung2012trajectory} in both theoretical and applied settings. 
For surveys, see~\cite{bian2018survey} and \cite{machado2024survey}.

\subparagraph{Subtrajectory clustering. } 
Comparing whole trajectories gives little information about shared structures. Consider two trajectories $T_1$ and $T_2$ that represent individuals who drive the same route to work, but live in different locations. 
For almost all distance metrics, the distance between $T_1$ and $T_2$ is large because their endpoints are (relatively) far apart. However, $T_1$ and $T_2$ share similar subtrajectories. 
Subtrajectory clustering allows us to model a wide range of behaviours: from detecting commuting patterns, to flocking behaviour, to congestion areas~\cite{buchin2011detecting}. 
In the special case where trajectories correspond to traffic on a known road network,
trajectories can be mapped to the road network. In many cases, the underlying road network might not be known (or it may be too dense to represent).
In such cases, it is interesting to construct a sparse representation of the underlying road network based on the trajectories.
The latter algorithmic problem is called \emph{map construction}, and it has also been studied extensively~\cite{
ahmed2015comparison, biagioni2012map, ahmed2012constructing,cao2009gps, davies2006scalable,karagiorgou2012vehicle, karagiorgou2013segmentation}.
We note that subtrajectory clustering algorithms may be used for map construction: if each cluster is assigned one centre, then the collection of centres can be used to construct an underlying road network~\cite{buchin2017clustering, buchin2020improved}.
% The converse is not necessarily true. 

\subparagraph{Related work. }
BBGLL~\cite{buchin2011detecting} introduce the following problem. 
Rather than clustering $\mathcal{T}$, their goal is to compute a \emph{single} cluster. Given $\mathcal{T}$,  a $\Delta$-\emph{\pathlet} is any pair $(P, {\fancy{P}})$ where:

\begin{itemize}
    \item $P$ is a subtrajectory of a trajectory in $\mathcal{T}$,
    \item ${\fancy{P}}$ is a set of subtrajectories, where for all $\mathcal{T}[a, b], \mathcal{T}[c, d] \in {\fancy{P}}$, $[a, b] \cap [c, d] = \emptyset$,
    \item and, for all $P' \in {\fancy{P}}$, the \frechet distance between $P$ and $P'$ is at most $\Delta$. 
\end{itemize}

See Figure~\ref{fig:pathlet}.
They define two functions that each output a single $\Delta$-\pathlet $(P, {\fancy{P}})$:

\begin{itemize}
    \item \textbf{SC($\max$, $\ell$, $\Delta$, $\mathcal{T}$)} requires that $|P| = \ell$ and maximises $|{\fancy{P}}|$.
    \item \textbf{SC($m$, $\max$, $\Delta$, $\mathcal{T}$)}  requires  that $|{\fancy{P}}| = m$ and maximises $|P|$. 
\end{itemize}

  If $|P| = \ell$ and $|{\fancy{P}}| = m$, their algorithms use $O(n^2 + n m \ell)$ time and $O(n \ell)$ space.

AFMNPT~\cite{agarwal2018subtrajectory} propose a clustering framework.
They define a clustering $C$ as any set of \pathlets.
They propose a scoring function with constants $(c_1, c_2, c_3)$ that respectively weigh: the number of \pathlets, the radius of each \pathlet, and the number of uncovered vertices. 
Their implement a heuristic algorithm that uses this scoring function.

BBDFJSSSW~\cite{buchin2017clustering} do map construction.
They compute a set of $\Delta$-\pathlets $C$ for various $\Delta$. 
For each $(P, {\fancy{P}}) \in C$, they use $P$ as a road on the map they are constructing. 
Thus, it may be argued that BBDFJSSSW~\cite{buchin2017clustering} do subtrajectory clustering. 
As a preprocessing step, they implement \textbf{SC($\max$, $\ell$, $\Delta$, $\mathcal{T}$)}. By trying various choices for $\ell$ and $\Delta$, they compute a collection $S$ of `candidate' \pathlets. 
On a high level, they greedily select a \pathlet $(P, {\fancy{P}}) \in S$, add it to the clustering, and remove all subtrajectories in ${\fancy{P}}$ from all remaining \pathlets in $S$. For a \pathlet $(Q, {\fancy{Q}}) \in S$, this may split a $Q' \in {\fancy{Q}}$ into two subtrajectories. BBGHSSSSSW~\cite{buchin2020improved} continue this line of work with improved implementations. 

ABCD~\cite{bruning_subtrajectory_2023} do not require that for all $\mathcal{T}[a, b], \mathcal{T}[c, d] \in {\fancy{P}}$, $[a, b] \cap [c, d] = \emptyset$. They do require that the final clustering $C$ has no uncovered vertices, and that for each \pathlet $(P, {\fancy{P}})$, $|P| \leq \ell$ for some given $\ell$. For any $\Delta$, denote by $k_\Delta$ the minimum-sized clustering that meets their requirements. Their randomised polynomial-time algorithm computes  a clustering of $O(\Delta)$-\pathlets of size $\tilde{O}(k_\Delta \ell)$.
BCD~\cite{bruning2022faster} give polynomial-time clustering algorithm of $O(\Delta)$-\pathlets of size $\tilde{O}(k_\Delta)$.  
Conradi and Driemel~\cite{conradi2023finding} implement an $\tilde{O}(\ell^2 n^4 + k_\Delta \ell n^4)$-time algorithm that computes a set of $O(\Delta)$-\pathlets of size $O(k_\Delta \log n)$. 

\subparagraph{Contribution.} Our contribution is threefold.
First, in Section~\ref{sec:clustering}, we describe an efficient implementation of \textbf{SC($\max$, $\ell$, $\Delta$, $\mathcal{T}$)} and \textbf{SC($m$, $\max$, $\Delta$, $\mathcal{T}$)}.
We compare our implementation to all existing single-core implementations. 

In Section~\ref{sec:clustering} we propose that \textbf{SC($\max$, $\ell$, $\Delta$, $\mathcal{T}$)} produces high-quality \pathlets across all clustering metrics. 
We create a greedy clustering algorithm that fixes $(\ell, \Delta)$ and iteratively adds the solution $(P, {\fancy{P}}) =$ \textbf{SC($\max$, $\ell$, $\Delta$, $\mathcal{T}$)} to the clustering (removing all vertices in ${\fancy{P}}$ from $\mathcal{T}$). An analogous approach for \textbf{SC($m$, $\max$, $\Delta$, $\mathcal{T}$)} gives two clustering algorithms. 

In Section~\ref{sec:pareto} we note that the functions \textbf{SC($\max$, $\ell$, $\Delta$, $\mathcal{T}$)} and \textbf{SC($m$, $\max$, $\Delta$, $\mathcal{T}$)} search a highly restricted solution space, having fixed either the minimum length of the centre of a \pathlet or the minimum cardinality. 
We observe that these functions always output a \pathlet on the Pareto front of some bivariate function $\theta(m, \ell)$. 
Since points on this Pareto front give high-quality \pathlets, we argue that it may be beneficial to approximate all vertices of this Pareto front directly -- sacrificing some \pathlet quality for \pathlet flexibility. 
We create a new algorithm \textbf{PSC($\Delta$, $\mathcal{T}$)} that computes in $O( n^2 \log^4 n)$ time and $O(n)$ space a set of \pathlets that is a $2$-approximation of the Pareto font of $\theta(m, \ell)$. 
We show that using \textbf{PSC($\Delta$, $\mathcal{T}$)}  as a  subroutine for greedy clustering gives even better results. 

In Section~\ref{sec:experiments} we do experimental analysis. 
We compare our clustering algorithms to AFMNPT~\cite{agarwal2018subtrajectory} and BBGHSSSSSW~\cite{buchin2020improved}.
We adapt the algorithm in~\cite{buchin2020improved} so that it terminates immediately after computing the subtrajectory clustering. 
We do not compare to Conradi and Driemel~\cite{conradi2023finding} as their setting is so different that it produces incomparable \pathlets. 
We compare the algorithms based on their overall run time and space usage. 
We evaluate the clustering quality along various metrics such as the number of \pathlets, the average cardinality of each \pathlet, and the \frechet distance between subtrajectories in a \pathlet. 
Finally, we consider the scoring function by AFMNPT~\cite{agarwal2018subtrajectory}. Our repository can be found here~\cite{respository}.

 \begin{figure}[t]
        \centering
        \includegraphics[]{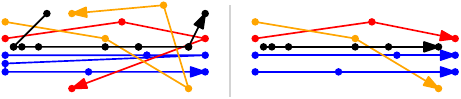}
        \caption{ (left) Four trajectories. (right) Five subtrajectories ${\fancy{P}}$. Depending on the choice of $P \in \cP$, we may get five different \pathlets $(P, {\fancy{P}})$. \label{fig:pathlet}} 
    \end{figure}

  \begin{figure}[h]
        \centering
        \includegraphics[]{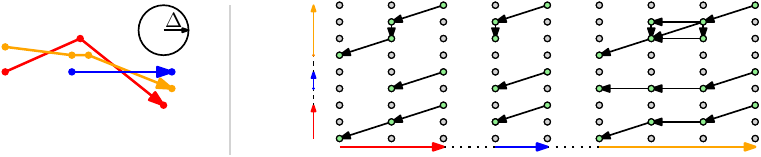}
        \caption{ A set of trajectories $\mathcal{T}$ and the matrix $M_\Delta(\mathcal{T}, \mathcal{T})$ as a graph. \label{fig:matrix}}
    \end{figure}

\newpage
\section{Preliminaries}

A \emph{trajectory} $T$ in $\mathbb{R}^2$ is an ordered sequence of vertices.  
The algorithmic input is a set $\mathcal{T}$ of trajectories where $n = \sum_{T \in \mathcal{T}} |T|$. 
We denote by $\mathcal{X}$ the set of all vertices in $T \in \mathcal{T}$. 
We view any trajectory $T$ with $n_T$ vertices as a map from $[n_T]$ to $\mathbb{R}^2$ where $T(i)$ is the $i$'th vertex on $T$. 
For brevity, we refer to $\mathcal{T}$ as a single trajectory obtained by concatenating all trajectories in $\mathcal{T}$ arbitrarily.
 A \emph{boundary edge} in $\mathcal{T}$ is any edge in $\mathcal{T}$ between the endpoints of two consecutive trajectories.
A \emph{subtrajectory} $\mathcal{T}[a, b]$ of $\mathcal{T}$ is defined by $a, b \in [n]$ with $a \leq b$, where we require that the subcurve from $\mathcal{T}(a)$ to $\mathcal{T}(b)$ contains no boundary edge.

\subparagraph{\frechet distance.}
We first define discrete walks for pair of trajectories $P = (p_1, \ldots, p_x)$ and $T = (t_1, \ldots, t_y)$ in $\mathbb{R}^2$.
We say that an ordered sequence $F$ of points in $[x] \times [y]$ is a \emph{discrete walk} if consecutive pair $(i, j), (s, l) \in F$ have $s\in \{i,i-1\}$ and $l\in\{j,j-1\}$.
The \emph{cost} of a discrete walk $F$ is the maximum distance $d(P(i), Q(j))$ over $(i, j) \in F$.
The discrete \frechet distance is the minimum-cost discrete walk from $(x, y)$ to $(1, 1)$:
\[
\fd(P, Q) := \min_{F }  \textnormal{cost}(F) = \min_{F } \max_{(i, j) \in F} d(P(i), Q(j)).
\]

\subparagraph{Free-space matrix.}
Given $\mathcal{T}$ and  a value $\Delta$, the \emph{Free-space matrix} $M_\Delta(\mathcal{T}, \mathcal{T})$ is an $n \times n$ $01$-matrix where for all $(i, j) \in [n] \times [n]$ the matrix has a zero at position $i$, $j$ if $d(\mathcal{T}(i), \mathcal{T}(j)) \leq \Delta$.  We use matrix notation where $(i, j)$ denotes row $i$ and column $j$.
In the plane, $(0, 0)$ denotes the top left corner.
We also use $M_\Delta(\mathcal{T}, \mathcal{T})$ to denote the digraph where the vertices are all $(i, j) \in [n] \times [n]$ and there exists an edge from $(i, j)$ to $(s, k)$ whenever (see Figure~\ref{fig:matrix}):
\begin{itemize}
    \item $s \in \{ i, i -1 \}$ and $k \in \{ j, j - 1 \}$, and
%    \item $\{ \left(\mathcal{T}(s), \mathcal{T}(i) \right), \left(\mathcal{T}(j), \mathcal{T}(k) \right) \}$ are no boundary edge, and
    \item  $M_\Delta(\mathcal{T}, \mathcal{T})[i, j] = M_\Delta(\mathcal{T}, \mathcal{T})[s, k] = 0$.
\end{itemize}

\noindent
Alt and Godeau~\cite{alt1995computing} show that for any two subtrajectories $\mathcal{T}[a, b]$  and $\mathcal{T}[c, d]$: $\fd(T[a, b], T[c, d]) \leq \Delta$ if and only if there exists a directed path from $(b, d)$ to $(a, c)$ in this graph. 

\begin{definition}
\label{def:minrow}
    Given $\Delta$ and $\mathcal{T}$, we say that a vertex $(b, x)$ in $M_\Delta(\mathcal{T}, \mathcal{T}$) can \emph{reach} row $a$ if there exists an integer $x' \leq x$ such that there is a directed path from $(b, x)$ to $(a, x')$. We denote by $\minrow(b, x)$ the minimum integer $a'$ such that $(b, x)$ can reach row $a'$. 
    \end{definition}

\begin{definition}[Figure~\ref{fig:pathlet}]
    A \pathlet  is any pair $(P, {\fancy{P}})$ where $P$ is a subtrajectory and ${\fancy{P}}$ is a set subtrajectories with $P \in {\fancy{P}}$. We require that for all $\mathcal{T}[a, b], \mathcal{T}[c, d] \in \mathcal{T}$, $[a, b] \cap [c, d] = \emptyset$.     
    We define its \emph{length} $|P|$, \emph{cardinality} $|{\fancy{P}}|$, and \emph{coverage} $\Cov(P, {\fancy{P}}) =  \bigcup_{S \in {\fancy{P}}}$ S. 
\end{definition}

\noindent
Note that we follow~\cite{agarwal2018subtrajectory, buchin2017clustering, buchin2020improved, buchin2011detecting} and consider only \emph{disjoint} clusterings, meaning $\forall (P, {\fancy{P}}') \in C$, $\forall \mathcal{T}[a, b], \mathcal{T}[c, d] \in {\fancy{P}}$, the intervals $[a, b]$ and $[c, d]$ are disjoint.
For brevity, we say that a $\Delta$-\pathlet is any \pathlet $(P, {\fancy{P}})$ where $\forall P' \in {\fancy{P}} : \fd(P, P') \leq \Delta$.

\begin{definition}
A \emph{clustering} $C$ is any set of \pathlets. 
Its \emph{coverage} is $\Cov(C) := \bigcup\limits_{(P, {\fancy{P}}) \in C} \Cov(P, {\fancy{P}})$. 
\end{definition}

\noindent
AFMNPT~\cite{agarwal2018subtrajectory} propose a cost function to determine the quality of a clustering $C$.
We distinguish between \emph{$k$-centre} and \emph{$k$-means} clustering: 

\begin{definition}
\label{def:score}
Given constants $(c_1, c_2, c_3)$ , 
$\texttt{Score}(C)$ is a sum that weights three terms:
    \begin{enumerate}
    \item The number of \pathlets in $C$: $c_1 |C|$.
\item The chosen cost function for the clustering. I.e.,
   
    $c_2 \max\limits_{(P, {\fancy{P}}) \in C} \centercost(P, {\fancy{P}}) = c_2 \max\limits_{(P, {\fancy{P}}) \in C} \max\limits_{P' \in {\fancy{P}}} \fd(P, P')$ when doing $k$-centre clustering, or 
    
         $c_2 \sum\limits_{(P, {\fancy{P}}) \in C} \means(P, {\fancy{P}})  = c_2 \sum\limits_{(P, {\fancy{P}}) \in C} \sum\limits_{P' \in {\fancy{P}}} \fd(P, P')$ when doing $k$-means clustering. 
\item The fraction of uncovered points: $c_3 \frac{| \mathcal{X} \backslash \Cov(C) |}{|\mathcal{X}|}$.
\end{enumerate}
\end{definition}

\noindent
We also consider unweighted  metrics:
 the maximal and average \frechet distance between subtrajectories in a \pathlet, and
the maximal and average cardinality of the \pathlets.

\section{Data}

We use five real-world data sets (see Table~\ref{tab:data_sets}). For some of these data sets we include a subsampled version that only retains a $\frac{1}{c}$ fraction of the trajectories.

\begin{itemize}
    \item The first three data sets are Athens-small, Chicago and Berlin, consisting of GPS samples from car traffic across these cities.  These data sets are widely used
benchmark tests for road map construction~\cite{ahmed2015comparison, buchin2017clustering, wang2015efficient, huang2018automatic}. 
\item 
The fourth data set is the \emph{Drifter} data set that was used for subtrajectory clustering by Driemel and Conradi~\cite{conradi2023finding}. It consists of GPS data from ocean drifters.
\item The fifth data set is the UnID data set, which is the smallest data set used in~\cite{9827305}. 
\end{itemize}

\noindent
Unfortunately, we cannot use any of the data sets in~\cite{agarwal2018subtrajectory}. Most of these are no longer available.  Those that are available are far too large. It is unknown how~\cite{agarwal2018subtrajectory} preprocessed these. 

\subparagraph{Synthetic data.}
All competitor algorithms do not terminate on the larger real-world data sets, even when the time limit is 24 hours. Thus, we primarily use these sets to  compare algorithmic efficiency. 
Our qualitative analysis uses the subsampled data sets and synthetic data. We generate data by creating a random domain $X$. 
Then, for some parameter $c$, we randomly select $c \%$ of the domain and invoke an approximate TSP solver. We add the result as a trajectory. This results in a collection of trajectories with many similar subtrajectories. 
Synthetic-A has a random domain of 200 vertices. Synthetic-B has a random domain of 100 vertices. For both of these, we consider $c \in \{50, 90, 95\}$.

\vspace{1cm}

\begin{table}[h]
    \centering \ra{1.1}
    \begin{tabular}{@{}llrrl@{}}
    \toprule
      \textbf{Name} & \textbf{Real world} & \textbf{$\#$ of trajectories} & \textbf{$\#$ of vertices} & \textbf{Subsampled}  \\
      \midrule
        Athens-small & yes & 128           & 2840 & no \\
        Chicago-4 & yes & 222         & 29344 & $\frac{1}{4}$'th  \\
        Chicago & yes & 888         & 118360 & no \\
        Berlin-10 & yes & 2717         & 19130 & $\frac{1}{10}$'th  \\
        Berlin & yes & 27188        & 192223 & no \\
        Drifter & yes & 2167       & 1931692 & no \\
        UniD & yes & 362        & 214077 & no \\
        Synthetic-A & no & 50 & 5000 - 9500 & no\\
        Synthetic-B & no & 100 &  5000 - 9500 & no \\
        \bottomrule
    \end{tabular}
    \caption{Our data sets, their sizes, and whether we subsampled a fraction of their trajectories.}
        \label{tab:data_sets}
    \label{tab:my_label}
\end{table}

\clearpage 

\section{\texorpdfstring{Introducing and solving SC($m$, $\ell$, $\Delta$, $\mathcal{T}$)}{Introducing and solving SC(m, l, D, T)} }
\label{main:BBGLL}

BBGLL~\cite{buchin2011detecting} introduce the problem \textbf{SC($m$, $\ell$, $\Delta$, $\mathcal{T}$)}.
The input are integers $m$, $\ell$, $\Delta$ and a set of trajectories $\mathcal{T}$.  The goal is to find a $\Delta$-\pathlet $(P, {\fancy{P}})$ where $|{\fancy{P}}| \geq m$ and $|P| = \ell$.  
The problem statement may be slightly altered by maximising $m$ or $\ell$ instead. This yields the problems \textbf{SC($\max$, $\ell$, $\Delta$, $\mathcal{T}$)} and \textbf{SC($m$, $\max$, $\Delta$, $\mathcal{T}$)}.
There exist three single-core implementations of these algorithms (the \emph{MOVETK library}~\cite{movetk},  \emph{map-construct} \cite{buchin2017clustering} and its improvement: \emph{map-construct-rtrees}~\cite{buchin2020improved} a version of the map-construct algorithm that uses r-trees). 
\cite{movetk} uses $O(n^2)$ space and $O(n^2 + n m \ell)$ time.
Both \cite{buchin2017clustering, buchin2020improved} use $O(n \ell)$ space and $O(n^2 + n m \ell)$ time, and~\cite{buchin2020improved} offers running time improvements over~\cite{buchin2020improved}. 
We note that the latter implement a semi-continuous Fréchet distance variant instead of the discrete Fréchet distance. 
On the experimental input, switching between these metrics does not change the algorithmic output. 
We do not compare to the existing GPU-based implementation~\cite{gudmundsson2012gpu}.

  \begin{figure}[b]
        \centering
        \includegraphics[]{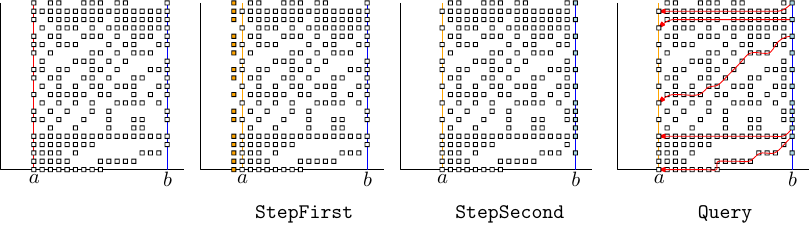}
        \caption{ An overview of the algorithm in~\cite{buchin2011detecting}. We illustrate zeroes in $M_\Delta(\mathcal{T}, \mathcal{T})$ with a square. \vspace{-0.5cm}}
        \label{fig:BBGLL}
    \end{figure}

\subparagraph{Contribution.} We sketch how we optimise the BBGLL algorithm in Appendix~\ref{sec:BBGLL}. 
Considers two integers $a, b$ with $\ell = b - a$ (Figure~\ref{fig:BBGLL}).
It alternates between
\texttt{StepFirst} which increments $a$, \texttt{StepSecond} which increments $b$, and $\texttt{Query}$ that given $(\Delta, \ell =  b - a, m)$ aims to find a $\Delta$-\pathlet $(P, {\fancy{P}})$ with $P = \mathcal{T}[a, b]$ and $|{\fancy{P}}| = m$.
We apply two ideas:

\begin{enumerate}
    \item \textbf{Windowing.} By~\cite{buchin2011detecting}, it suffices to keep only $M_\Delta(\mathcal{T}[a, b], \mathcal{T})$ in memory. Furthermore, one may store only the $z' \in O(n \ell)$ zeroes in this sub-matrix. This is implemented in~\cite{buchin2017clustering, buchin2020improved}. By cleverly abandoning memory, our \texttt{StepFirst} deletes a row in constant time. The \texttt{StepSecond} function is then the bottleneck, as it adds a new row.
    \item \textbf{Row generation.} In~\cite{buchin2011detecting}, the  \texttt{StepSecond} and \texttt{Query} functions iterate over all $n$ cells in the bottommost row of $M_\Delta(\mathcal{T}, \mathcal{T})$. By applying a range searching data structure, we instead only iterate over all zeroes in this row.
\end{enumerate}

\noindent
 We note that map-construct-rtree~\cite{buchin2020improved} uses r-trees for row generation (we refer to the journal version~\cite{buchin2025roadster}). 
 However, we believe that the size of their r-tree implementation dominates the time and space used by their algorithm in our experiments. In Appendix~\ref{sec:BBGLL}, we describe how to use a light-weight range tree to implement this principle.

\subparagraph{Experiments.}
We compare the implementations across our data sets for a variety of choices of $\Delta$ and $m$. We use the implementations of \textbf{SC}($m$, $\max$, $\Delta$, $\mathcal{T}$) so that our experiments have to fix fewer parameters. 
Figures~\ref{fig:bar_bbgll_time} and~\ref{fig:bar_bbgll_memory} shows the time and space usage across a subset of our experiments.
The data clearly shows that our implementation is significantly more efficient than the previous single-core implementations.
It is often more than a factor 1000 more efficient in terms of runtime and memory usage.
We do not provide a more extensive analysis of these results since this is not our main contribution.

    \begin{figure}[H]
    \hspace*{-1.45in}
    \includegraphics[width = 1.5\textwidth]{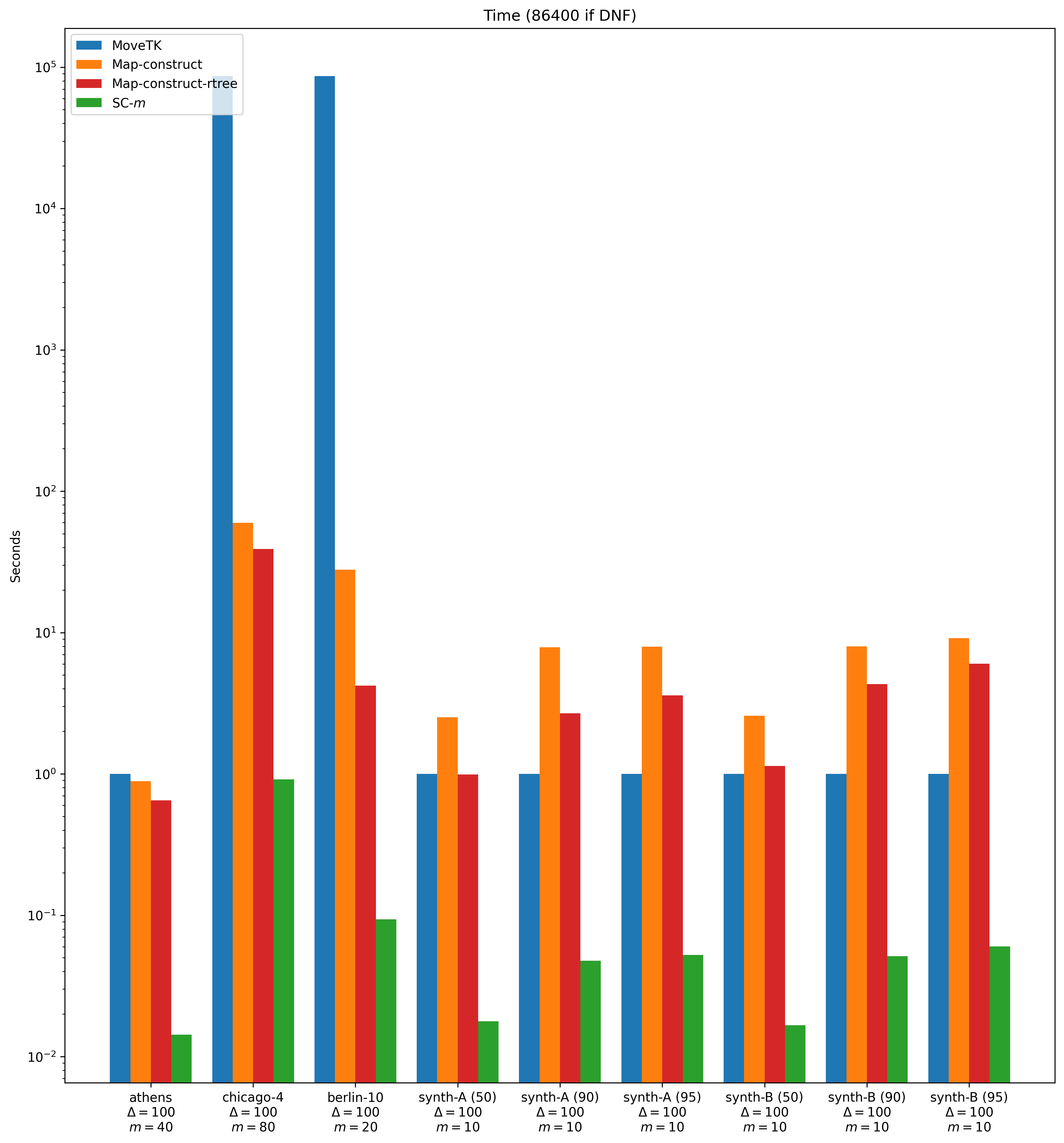}

    \caption{$\blacksquare$ BBGLL implementation comparison $\blacksquare$ Running time  $\blacksquare$ Logarithmic scaling}
    \label{fig:bar_bbgll_time}
\end{figure}

\begin{figure}[H]
\hspace*{-1.5in}
    \includegraphics[width=1.5\textwidth]{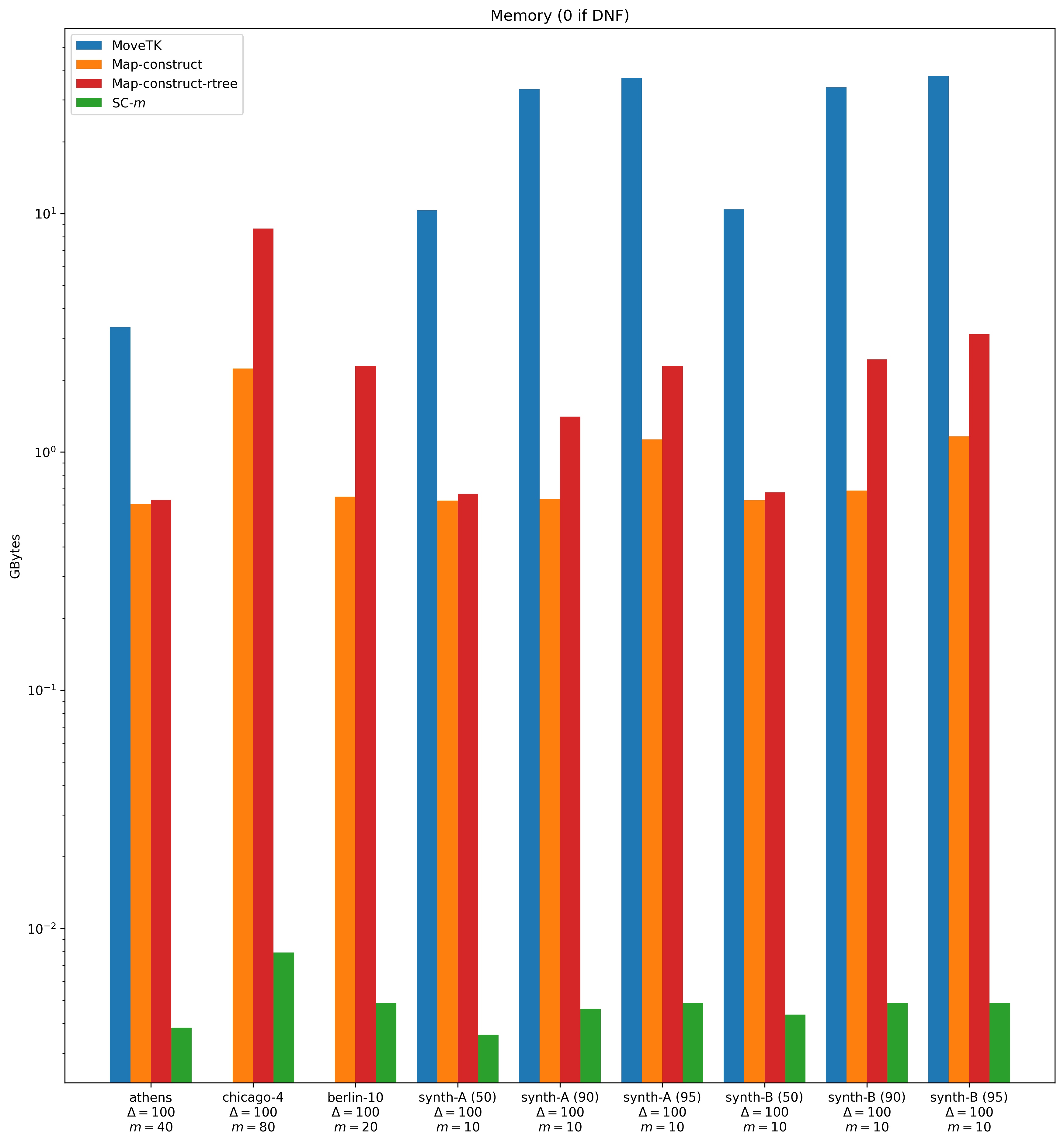}
    \caption{$\blacksquare$ BBGLL implementation comparison $\blacksquare$ Memory usage  $\blacksquare$ Logarithmic scaling  }
     \label{fig:bar_bbgll_memory}
\end{figure}

\newpage

\section{\texorpdfstring{Greedy clustering using SC($\max$, $\ell$, $\Delta$, $\mathcal{T}$)}{Greedy clustering using SC(max, l, D, T)}}
\label{sec:clustering}

Our primary contribution is that we propose to use \textbf{SC($\max$, $\ell$, $\Delta$, $\mathcal{T}$)} in a greedy clustering algorithm. 
On a high level, our clustering algorithm does the following:

\begin{itemize}
    \item We first choose the scoring function and constants (see Definition~\ref{def:score}).
    \item    Our input is the set of trajectories $\mathcal{T}$, some $\Delta$, some $\ell$, and an empty clustering $C$. 
    \item 
We obtain a \pathlet $(P, {\fancy{P}}) =$ \textbf{SC($\max$, $\ell$, $\Delta$, $\mathcal{T}$)}, and add it to $C$.
\item 
  We then remove all subtrajectories in ${\fancy{P}}$  from $\mathcal{T}$, and recurse. 
  \item 
We continue this process until the addition of $(P, {\fancy{P}})$ to $C$ no longer increases $\texttt{Score}(C)$.
\end{itemize}

\noindent
The implementation details depend on whether we are doing $k$-centre or $k$-median clustering.

\subsection{\texorpdfstring{$k$-centre clustering.}{k-centre clustering}}

Given are $(c_1, c_2, c_3)$ and a set of trajectories $\mathcal{T}$. 
We choose some integer $\ell$. 
The $k$-centre clustering scoring function weighs the value:  $\max\limits_{(P, {\fancy{P}}) \in C} \centercost(P, {\fancy{P}}) = \max\limits_{(P, {\fancy{P}}) \in C} \max\limits_{P' \in {\fancy{P}}} \fd(P, P')$.
\vspace{0.5cm}

\noindent
If $C$ is a set of $\Delta$-\pathlets (we assume there exists a \pathlet in $C$ whose radius is $\Delta$) then:
\[
\texttt{Score}(C) =  c_1 \cdot |C| + c_2 \cdot \Delta + c_3 \cdot \frac{|\mathcal{X} \backslash \Cov(C) |}{|\mathcal{X}|}. 
\]

\noindent
Let $C$ be a set of $\Delta$-\pathlets and $(P, {\fancy{P}})=$ \textbf{SC(}$\max$, $\ell$, $\Delta$, $\mathcal{T} - \Cov(C)$).
Per construction of our algorithm, $\Cov(P, {\fancy{P}}) \cap \Cov(C) = \emptyset$ and so the contribution of  $(P, {\fancy{P}})$ to \texttt{Score($C \cup \{ (P, {\fancy{P}})\}$)} equals $c_1 - c_3 \cdot \frac{|\Cov(P, {\fancy{P}})|}{|\mathcal{X}|}$.
We obtain $|\Cov(P, {\fancy{P}})|$ using constant additional time throughout our algorithm.
This leads to the following algorithm:

\begin{algodef}[ SC-$\ell$($(c_1, c_2, c_3)$, $\mathcal{T}$) under $k$-centre clustering]
\label{algodef:k-centre-length}
    We run Algorithm~\ref{alg:k-centre} $(c_1, c_2, c_3)$ for exponentially increasing $\Delta \in (2, 4, \ldots )$  where the subroutine $f$ is \textbf{SC}($\max$, $\ell$, $\Delta$, $\mathcal{T}$). We then output the maximum-score clustering $C$ over all runs.
\end{algodef}

 \begin{algorithm}[h]
    \caption{\texttt{k\_centre}(\texttt{vector} $(c_1, c_2, c_3)$, \texttt{value} $\Delta$, \texttt{trajectory} $\mathcal{T}$, \texttt{subroutine} $f$) }
    \label{alg:k-centre}
    \begin{algorithmic}
    \State $C \gets \emptyset$, $T \gets \mathcal{T}$.
    \While{$|T| > 0$}
    \State $(P, {\fancy{P}}) \gets $ the result of applying $f$ to ($\Delta$, $T$)
    \If{ \texttt{Score}($C \cup \{ (P, {\fancy{P}}) \}$) $>$ \texttt{Score}($C$) }
    \State $C$.add($(P, {\fancy{P}})$).
    \State $T \gets T - {\fancy{P}}$.
    \Else
    \State Break.
    \EndIf
    \EndWhile
    \State \Return $C$
    \end{algorithmic}
  \end{algorithm}

\noindent
We may also fix $m$ instead of $\ell$. Or, formally:

\begin{algodef}[ SC-$m$($(c_1, c_2, c_3)$, $\mathcal{T}$) under $k$-centre clustering ]
\label{algodef:k-centre-size}
         We run Algorithm~\ref{alg:k-centre} $(c_1, c_2, c_3)$ for exponentially increasing $\Delta \in (2, 4, \ldots )$  where the subroutine $f$ is \textbf{SC}($m$, $\max$, $\Delta$, $\mathcal{T}$). We then output the maximum-score clustering $C$ over all runs.
\end{algodef}

\subsection{\texorpdfstring{$k$-means clustering.}{k-means clustering} }

Our algorithm for $k$-means clustering is less straightforward. 
To get good performance we must iterate over different choices of $\Delta$ across the algorithm. Consider a clustering $C$ and $(P, {\fancy{P}})=$ \textbf{SC($\max$, $\ell$, $\Delta$, $\mathcal{T} - \Cov(C)$)} for some $\Delta$. 
Under $k$-centre clustering, the more points a $\Delta$-\pathlet covers, the better its scoring contribution. 
For $k$-means clustering this is no longer true as the contribution of $(P, {\fancy{P}})$ to \texttt{Score($C \cup \{ (P, {\fancy{P}})\}$)} equals:

\[
\textnormal{Contribution of } (P, {\fancy{P}}) \textnormal{ to } \texttt{Score}(C) = c_1 - c_3 \cdot \frac{|\Cov(P, {\fancy{P}})|}{|\mathcal{X}|} + \sum_{P' \in {\fancy{P}}} c_2 \cdot \fd(P, P').
\]

\noindent

\noindent
We abuse the fact that subtrajectories share no vertices to simplify the above expression to:

\[
\textnormal{Contribution of } (P, {\fancy{P}}) \textnormal{ to } \texttt{Score}(C) =  
c_1 + \sum_{P' \in {\fancy{P}}} \left( c_2 \cdot \fd(P, P') - c_3  \frac{|P'|}{|\mathcal{X}|} \right).
\]

\noindent
Consider a set of `candidate' \pathlets $S$.
To select the best \pathlet from $S$, we use the same greedy set cover technique as~\cite{agarwal2018subtrajectory}. We assign to every \pathlet a evaluation value. This is the ratio between how much it reduces the score, and how much it increases the score. 
I.e.,
\[
\texttt{evaluation}(P, {\fancy{P}}) = \frac{\sum\limits_{P' \in {\fancy{P}}  } c_3 \frac{|P'|}{|\mathcal{X}| } }{c_1 + \sum\limits_{P' \in {\fancy{P}}} c_2 \cdot \fd(P, P')}
\]

\noindent
We then select the \pathlet with the best evaluation among all $(P, {\fancy{P}}) \in S$. 
\noindent

\begin{algodef}[ SC-$\ell$($(c_1, c_2, c_3)$, $\mathcal{T}$) under $k$-means clustering]
    \label{algodef:k-means-length}
    We run Algorithm~\ref{alg:k-means} where the subroutine $f$ is \textbf{SC}($\max$, $\ell$, $\Delta$, $\mathcal{T}$). 
\end{algodef}

\begin{algodef}[ SC-$m$($(c_1, c_2, c_3)$, $\mathcal{T}$) under $k$-means clustering ]
    \label{algodef:k-means-size}
     We run Algorithm~\ref{alg:k-means} where the subroutine $f$ is \textbf{SC}($m$, $\max$, $\Delta$, $\mathcal{T}$). 
\end{algodef}

 \begin{algorithm}[H]
    \caption{\texttt{k\_means}(\texttt{vector} $(c_1, c_2, c_3)$, \texttt{trajectory} $\mathcal{T}$, \texttt{subroutine} $f$) }
    \label{alg:k-means}
    \begin{algorithmic}
    \State $C \gets \emptyset$, $T \gets \mathcal{T}$.
    \While{$|T| > 0$}
    \State $(P, {\fancy{P}}) $ $\gets$ null
    \For{$\Delta \in (2, 4, \ldots )$}
    \State $(Q, {\fancy{Q}}) \gets $ the result of applying $f$ to ($\Delta$, $\mathcal{T}$)
    \If{ $\texttt{evaluation}(Q, {\fancy{Q}}) > \texttt{evaluation}(P, {\fancy{P}})$ }
    \State $(P, {\fancy{P}}) \gets (Q, {\fancy{Q}})$. 
    \EndIf
    \EndFor
    \If{ \texttt{Score}($C \cup \{ (P, {\fancy{P}}) \}$) $>$ \texttt{Score}($C$)) }
    \State $C$.Add($(P, {\fancy{P}})$).
    \State $T \gets T - {\fancy{P}}$.
    \Else
    \State Break.
    \EndIf
    \EndWhile
    \State \Return $C$
    \end{algorithmic}
  \end{algorithm}

\clearpage
\section{A new algorithm as a subroutine}
\label{sec:pareto}

Our previous clusterings~\ref{algodef:k-centre-length}-\ref{algodef:k-means-size} fix a parameter $\ell$ (or $m$) and subsequently  \emph{only} produce \pathlets $(P, {\fancy{P}})$ of maximal cardinality with  $|P| = \ell$ (or of maximum length with $| {\fancy{P}}| = m$). 

This approach has three clear downsides:

\begin{enumerate}
    \item To find a suitable $\ell$ (or $m$) you have to train the algorithm on a smaller data set. 
    \item Using multiple values of $\ell$ (or $m$) to broaden the search space is computationally expensive. 
    \item Even if we fix constantly many choices for $\ell$ and $m$, the corresponding algorithm has a heavily restricted solution space which affects the quality of the output. 
\end{enumerate}

\noindent
To alleviate these downsides, we design a new algorithm based on the following observation: 

\begin{definition}
    For fixed $(\Delta, \mathcal{T})$, we define a bivariate Boolean function $\theta(m, \ell)$ where $\theta(m, \ell)$ is $1$ if and only if there exists a $\Delta$-\pathlet $(P, {\fancy{P}})$ of $\mathcal{T}$ with $m = |{\fancy{P}}|$ and $\ell = |P|$.  
\end{definition}

\noindent
The functions \textbf{SC($\max$, $\ell$, $\Delta$, $\mathcal{T}$)} and
\textbf{SC($m$, $\max$, $\Delta$, $\mathcal{T}$)} always output a $\Delta$-\pathlet $(P, {\fancy{P}})$ where $\theta(|P|, |{\fancy{P}}|)$ lies on the Pareto Front of $\theta(\cdot \, , \,  \cdot)$. 
If \pathlets on the Pareto front of $\theta(\cdot \, , \,  \cdot)$ are good candidates to include in a clustering, then it may be worthwhile to directly compute an approximation of this Pareto front as a candidate set:

\begin{definition}
    Given $(\Delta, \mathcal{T})$, a set of $\Delta$-\pathlets $S$ is a $2$-approximate Pareto front if for every $\Delta$-\pathlet $(P, {\fancy{P}})$ there exists a $(Q, {\fancy{Q}}) \in S$ with:
    $|{\fancy{P}}| \leq |{\fancy{Q}}| $ and $|P| \leq 2 |Q|$.
\end{definition}

\noindent
Our new algorithm,  \textbf{PSC($\Delta$, $\mathcal{T}$)}, \emph{iterates} over a $2$-approximate Pareto front. 
Storing this front explicitly uses $O(n^2 \log n)$ space. Instead, we generate these clusters on the fly.  

\subparagraph{High-level description.}
Let $B(\mathcal{T})$ be a balanced binary tree over $\mathcal{T}$.
Let $S(\Delta, \mathcal{T})$ start out as empty. Our algorithm \textbf{PSC}$(\Delta, \mathcal{T})$  (defined by Algorithm~\ref{alg:PSC}) does the following:

\begin{itemize}
    \item For each inner node in $B$, denote by  $\mathcal{T}[a, b]$ the corresponding subtrajectory of $\mathcal{T}$.
    \item For every prefix or suffix $P$ of $\mathcal{T}[a, b]$, let  $(P, {\fancy{P}})$ be a $\Delta$-\pathlet  maximising $|{\fancy{P}}|$.  
    \item We add $(P, {\fancy{P}})$ to $S(\Delta, \mathcal{T})$.
\end{itemize}

\subparagraph{Computing all prefix \pathlets for $\mathcal{T}[a, b]$.}
Given a trajectory $\mathcal{T}[a, b]$ and our current set $S(\Delta, \mathcal{T})$, we simultaneously compute for all prefixes $P$ of $\mathcal{T}[a, b]$ a maximum-cardinality \pathlet $(P, {\fancy{P}})$ and add this \pathlet to $S(\Delta, \mathcal{T})$.
We define an analogue of Definition~\ref{def:minrow}:

\begin{definition}
Consider row $a$ in $M_\Delta(\mathcal{T}, \mathcal{T})$. 
    For $(x, y) \in M_\Delta(\mathcal{T}, \mathcal{T})$ let $\maxcol_a(x, y)$ be the maximum column $d \leq y$ such that there exists a directed path from $(x, y)$ to $(a, d)$.
\end{definition}

\noindent
Given $\mathcal{T}[a,b]$ we initialize our algorithm by storing integers $a$ and $c \gets a$. 
We increment $c$ and maintain for all $y \in [n]$ the value $\maxcol_a(c, y)$. This uses $O(n)$ space, as does storing row $c$. 
Given this data structure, we alternate between a \texttt{StepSecond} and \texttt{Query} subroutine:

  \vspace{0.5cm}
\texttt{StepSecond:} We increment $c$ by $1$. 
For each $y \in [n]$, we first check whether $d(\mathcal{T}(c), \mathcal{T}(y)) > \Delta$.
If it is then $(c, y)$ is an isolated vertex in the graph $M_\Delta(\mathcal{T}, \mathcal{T})$. Thus, there exists no directed path from $(c, y)$ to any vertex in row $a$ and we set $\maxcol_a(c, y) = - \infty$. 
Otherwise, $\maxcol_a(c, y) = \max \{ \maxcol_a(\alpha, \beta) \mid  \alpha \in \{ c, c - 1 \}, \beta \in \{ y, y - 1 \}   \}$ which we compute in $O(1)$ time. 

\texttt{Query:}
For a prefix $P = \mathcal{T}[a, c]$ of $\mathcal{T}[a, b]$ we compute a maximum-cardinality $\Delta$-\pathlet $(P, {\fancy{P}})$ using the values $\maxcol_a(c, y)$ (see Algorithm~\ref{alg:query_fast}).

 \begin{algorithm}[b]
    \caption{\textbf{PSC}($\Delta$, $\mathcal{T}$)}
    \label{alg:PSC}
    \begin{algorithmic}
    \State $S(\Delta, \mathcal{T}) \gets \emptyset$.
    \State $B(\mathcal{T}) \gets$ balanced binary tree over $\mathcal{T}$.
    \ForAll{$T[a, b] \in B(\mathcal{T}) $}
    \ForAll{$c \in [a, b]$}
    \State \texttt{StepSecond}($\Delta$, $\mathcal{T}$, $c$).
    \State \texttt{Query}($T[a, c]$, $\Delta$, $\mathcal{T}$, $S(\Delta, \mathcal{T})$).
    \EndFor
    \ForAll{$c \in [a, b]$ in decreasing order}
    \State \texttt{StepSecond}($\Delta$, $\mathcal{T}$, $c$).
    \State \texttt{Query}'($T[c, b]$, $\Delta$, $\mathcal{T}$, $S(\Delta, \mathcal{T})$).
    \EndFor
    \EndFor
    \State \Return $S(\Delta, \mathcal{T})$
    \end{algorithmic}
  \end{algorithm}

Note that Algorithm~\ref{alg:query_fast} is essentially the same as the algorithm in~\cite{buchin2011detecting} that we described in Section~\ref{sec:clustering}. 
Thus, it computes a maximum-cardinality \pathlet.
The key difference is that we, through $\maxcol_a(b,j_2)$,  avoid walking through the free-space matrix in $O(\ell)$ time to compute $j_1$.  
We denote by  \texttt{Query}'($T[c, b]$, $\Delta$, $\mathcal{T}$, $S(\Delta, \mathcal{T})$) the function which computes for a suffix $P = \mathcal{T}[c, b]$ of $\mathcal{T}[a, b]$ a maximum-cardinality \pathlet  $(P, {\fancy{P}})$ in an analogous manner. 

\begin{lemma}
    The set $S(\Delta, \mathcal{T})$ is a $2$-approximate Pareto front. 
\end{lemma}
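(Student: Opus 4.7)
The plan is to show: for every $\Delta$-\pathlet $(P, \mathcal{P})$ with $P = \mathcal{T}[i,j]$, there exists $(Q, \mathcal{Q}) \in S(\Delta, \mathcal{T})$ with $|\mathcal{P}| \leq |\mathcal{Q}|$ and $|P| \leq 2|Q|$. My witness $Q$ will be a sub-interval of $P$ of length at least $|P|/2$ (making the length bound automatic), and the cardinality bound will follow by ``projecting'' each $P' \in \mathcal{P}$ onto a sub-interval of $P'$ that stays $\Delta$-close to $Q$.

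For constructing $Q$, I would use the following tree argument. Assume $i < j$ and let $[A,B]$ be the lowest common ancestor of the leaves $i$ and $j$ in $B(\mathcal{T})$. Since $i \neq j$, this LCA is an inner node with children $[A,m]$ and $[m+1,B]$ satisfying $A \leq i \leq m < j \leq B$. Then $[i,m]$ is a suffix of $[A,m]$ and $[m+1,j]$ is a prefix of $[m+1,B]$; their lengths sum to $|P|$, so one of them has length at least $|P|/2$, and I take that one as $Q$. Provided the corresponding child of $[A,B]$ is an inner node, Algorithm~\ref{alg:PSC} has visited it and inserted into $S(\Delta,\mathcal{T})$ a max-cardinality $\Delta$-\pathlet centered at $Q$. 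The only corner case is when the chosen child of $[A,B]$ is a leaf, which forces $|P| \leq 2$; then I fall back to taking $Q$ as a length-$1$ or length-$2$ prefix or suffix of the inner node $[A,B]$ itself. The case $i = j$ is handled similarly by choosing $Q$ to be a length-$1$ prefix of any inner node containing leaf $i$.

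For the cardinality bound, assume without loss of generality that $Q = \mathcal{T}[i,k]$ is a prefix of $P$. For each $P' = \mathcal{T}[i',j'] \in \mathcal{P}$, fix a discrete walk $F_{P'}$ of cost at most $\Delta$ from $(j,j')$ to $(i,i')$ in $M_\Delta(\mathcal{T},\mathcal{T})$, which exists since $\fd(P,P') \leq \Delta$. The row coordinate is non-increasing along $F_{P'}$ and $i \leq k \leq j$, so $F_{P'}$ visits some cell $(k,k')$ in row $k$; the sub-walk of $F_{P'}$ from $(k,k')$ to $(i,i')$ then witnesses $\fd(\mathcal{T}[i,k], \mathcal{T}[i',k']) \leq \Delta$. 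Setting $Q' := \mathcal{T}[i',k']$ yields sub-intervals $Q' \subseteq P'$ which inherit pairwise disjointness from the $P'$; together with $Q$ itself (which arises as the projection of $P \in \mathcal{P}$) they form a valid $\Delta$-\pathlet centered at $Q$ of cardinality $|\mathcal{P}|$. Since $S(\Delta,\mathcal{T})$ contains a \emph{maximum}-cardinality $\Delta$-\pathlet centered at $Q$, this gives $|\mathcal{Q}| \geq |\mathcal{P}|$.

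The main obstacle is the case analysis in the tree argument: one must verify that the chosen child of the LCA is itself an inner node (otherwise Algorithm~\ref{alg:PSC} would not have processed any prefix or suffix of it), which fails only in the edge cases $|P| \leq 2$ absorbed by the fallback to $[A,B]$. Beyond that, the argument uses only the binary structure of $B(\mathcal{T})$ and Alt and Godeau's characterisation of $\fd(\cdot,\cdot) \leq \Delta$ via directed paths in $M_\Delta(\mathcal{T},\mathcal{T})$, so balance of the tree is not needed for correctness (only for efficiency).
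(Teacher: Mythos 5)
Your proof is correct and takes essentially the same route as the paper's: identify the lowest tree node whose interval contains $P$, split at its midpoint, take the longer of the resulting suffix/prefix as $Q$, and observe that restricting the Fr\'echet witness walks to rows $\geq$ the truncation point shows the max-cardinality \pathlet centered at $Q$ has cardinality at least $|{\fancy{P}}|$. You are somewhat more careful than the paper: you spell out the walk-projection argument for the cardinality bound (the paper compresses this to a parenthetical remark), and you explicitly handle the edge cases where the chosen child of the LCA is a leaf (falling back to a short prefix/suffix of the LCA itself, which the algorithm does process as an inner node), a detail the paper's proof glosses over.
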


\begin{proof} 
    Consider any $\Delta$-\pathlet $(P, {\fancy{P}})$.
    Let $P = \mathcal{T}[a, b]$.
    Let $\cT[i, k]$ be the minimal subtrajectory in our tree that has $\cT[a, b]$ as a subtrajectory.
    Let the children of $\cT[i, k]$ in our tree be $\cT[i, j]$ and $\cT[j+1, k]$. Then, $a \le j < b$ by minimality of $\cT[i, k]$.
    Hence, $\cT[a, j]$ is a suffix of $\cT[i, j]$, and $\cT[j+1, b]$ is a prefix of $\cT[j+1, k]$.
    There must exist a trajectory $P' \in \{ \mathcal{T}[a, j],  \mathcal{T}[j+1, b]) \}$ which contains at least half as many vertices as $T[a, b]$.
    Since $P'$ is either a prefix or suffix of a subtrajectory in our tree, we must have added some maximum-cardinality \pathlet $(P', {\fancy{P}}')$ to $S(\Delta, \mathcal{T})$.
    Moreover, since $P'$ is a subtrajectory of $P$, $|{\fancy{P}}'| \geq | {\fancy{P}}|$ (since any discrete walk from row $b$ to row $a$ is also a discrete walk in the Free-space matrix from row $d$ to row $c$ for $[c, d] \subseteq [a, b]$).
    % Let $\mathcal{T}[i, j]$ be the maximum-length subtrajectory in our tree that is a subtrajectory of $\mathcal{T}[a, b]$. 
    % We assume that its parent subtrajectory is some subtrajectory $\mathcal{T}[i, j']$ (as opposed to $\mathcal{T}[i', j]$). Denote by $\mathcal{T}[i', i]$ the subtrajectory in our tree of length $2|j - i|$ that ends in $\mathcal{T}(i)$. 
    % $T[a, b]$ is a subtrajectory of $\mathcal{T}[i', j']$. 
    % There must exist a trajectory $P' \in \{ \mathcal{T}[a, i],  \mathcal{T}[i, b]) \}$ which contains at least half as many vertices as $T[a, b]$.
    % Since $P'$ is either a prefix or suffix of a subtrajectory in our tree, we must have added some maximum-cardinality \pathlet $(P', {\fancy{P}}')$ to $S(\Delta, \mathcal{T})$.
    % Moreover, since $P'$ is a subtrajectory of $P$, $|{\fancy{P}}'| \geq | {\fancy{P}}|$ (since any discrete walk from row $b$ to row $a$ is also a discrete walk in the Free-space matrix from row $d$ to row $c$ for $[c, d] \subseteq [a, b]$).
\end{proof}

 \begin{algorithm}[h]
    \caption{Query($P = \mathcal{T}[a, c]$, $\Delta$, $\mathcal{T}, S(\Delta, \mathcal{T}))$}
    \label{alg:query_fast}
    \begin{algorithmic}
    \State $(c, j_2) \gets M_\Delta(\mathcal{T}, \mathcal{T})[c, n]$, ${\fancy{P}} \gets \emptyset$ 
    \While{$j_2 > 0$} 
    \If{$\maxcol_a(c, j_2) \neq -\infty $}
    \State $j_1 \gets \maxcol_a(b, j_2)$.
    \State Add    
    $\mathcal{T}[j_1, j_2]$ to  ${\fancy{P}}$.
    \State $(c, j_2) \gets (c, j_1 - 1)$.
    \Else 
    \State $(c, j_2) \gets (c, j_2 - 1)$. 
    \EndIf
    \EndWhile
    \State Add $(P, {\fancy{P}})$ to $S(\Delta, \mathcal{T})$.
    \end{algorithmic}
  \end{algorithm}

\begin{theorem}
The algorithm PSC($\Delta$, $\mathcal{T}$) iterates over a $2$-approximate Pareto front of $\mathcal{T}$ in $O(z \log^4 n)$ time using $O(n \log n$ space. Here, $z$ denotes the number of zeroes in $M_\Delta(\mathcal{T}, \mathcal{T})$. 
\end{theorem}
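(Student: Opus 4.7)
The plan is to combine the $2$-approximation property, which is already established by the preceding lemma, with a careful amortized analysis of the work done across the nodes of $B(\mathcal{T})$ and the cost of each \texttt{StepSecond} and \texttt{Query}. The running time bound $O(z\log^4 n)$ should fall out as the product of (i) the number of row-updates performed by the algorithm, (ii) the number of cells touched per update, and (iii) the polylogarithmic cost of the underlying range-searching data structure.

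First I would bound the number of times a single row $c \in [n]$ of $M_\Delta(\mathcal{T},\mathcal{T})$ is re-processed. Since $B(\mathcal{T})$ is a balanced binary tree, $c$ belongs to the range $\mathcal{T}[a,b]$ of exactly $O(\log n)$ nodes. Each such node triggers one prefix pass and one suffix pass, and each pass invokes exactly one \texttt{StepSecond} and one \texttt{Query} at row $c$. Hence the algorithm performs $O(n\log n)$ row-processing events in total.

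Next I would bound the cost of a single row processing event. Using the ``row generation'' approach from Section~\ref{sec:clustering} with a range tree over the zeros of $M_\Delta(\mathcal{T},\mathcal{T})$, \texttt{StepSecond} need not scan all $n$ columns: we iterate only over the zeros $(c,y)$ in the current row and, for each, compute $\maxcol_a(c,y)$ from the four relevant neighbours in $O(\log^3 n)$ time using range-tree operations. The \texttt{Query} routine (Algorithm~\ref{alg:query_fast}) similarly advances from zero to zero in the row, so it inherits the same per-zero cost. Letting $z_c$ denote the number of zeros in row $c$, a single pass at row $c$ costs $O((1+z_c)\log^3 n)$. Summing over all row passes gives $O\bigl(\log n \cdot \sum_{c} (1+z_c)\log^3 n\bigr) = O((n+z)\log^4 n) = O(z\log^4 n)$, since $z \geq n$ (the diagonal of $M_\Delta(\mathcal{T},\mathcal{T})$ is all zero). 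For space, the $\maxcol_a$ array uses $O(n)$ words and is rebuilt each time we descend to a new node of $B(\mathcal{T})$, the current row and related bookkeeping use $O(n)$ words, and the range tree over all zeros uses $O(n\log n)$ words, giving the claimed $O(n\log n)$ bound.

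The main obstacle I expect is the $O(\log^3 n)$-per-zero implementation of \texttt{StepSecond} and \texttt{Query}: to compute $\maxcol_a(c,y)$ from $\maxcol_a$ values of the previous row without scanning all $n$ columns, one must maintain a secondary range tree indexed by column that returns the maximum of $\maxcol_a$ among zeros in a column interval, and one must update this structure incrementally as $c$ advances. Verifying that each zero incurs only $O(1)$ such range-tree operations per pass, and that the incremental updates between two consecutive values of $c$ are also amortizable against the zeros, is the technical heart of the argument. Once this is in place, summing as above yields the theorem.
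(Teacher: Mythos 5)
Your proposal follows the same decomposition as the paper's proof: you count $O(n\log n)$ row-processing events by observing that each row $c$ belongs to $O(\log n)$ nodes of the balanced tree $B(\mathcal{T})$, you bound each pass at row $c$ by $O((1+Z_c)\log^3 n)$ via the row-generation technique, and you sum (using $z \geq n$, which the paper leaves implicit) to get $O(z\log^4 n)$, with space dominated by the range tree at $O(n\log n)$. This matches the paper's argument essentially step for step.

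The ``main obstacle'' you flag is, however, not an obstacle in the paper's implementation, and you have over-engineered the fix. You worry about a secondary range-max structure to retrieve $\maxcol_a$ values over column intervals, but the recurrence in \texttt{StepSecond} only reads three \emph{specific} cells of the previous row, namely $(c-1,y)$, $(c-1,y-1)$, and $(c,y-1)$. The paper simply stores $\maxcol_a(c-1,\cdot)$ and $\maxcol_a(c,\cdot)$ as explicit length-$n$ arrays (the $O(n)$-space budget already allows this), so each neighbor lookup and the max are genuinely $O(1)$. The $O(\log^3 n)$-per-zero factor comes entirely from the range-tree queries used to \emph{enumerate} the zeros in the new row, not from maintaining $\maxcol_a$. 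Keeping the array consistent when advancing $c$ requires clearing only the positions that were zeros in row $c-1$, which charges against $Z_{c-1}$ and stays within the same per-row budget. One more small correction: the range tree is built over the $n$ vertices of $\mathcal{T}$, not ``over all zeros''; a structure over the $z$ zeros could have size $\Theta(n^2\log n)$ and would blow the space bound.
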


\begin{proof}
Consider the balanced binary tree $B(\mathcal{T})$.
Each level of depth of $B(\mathcal{T})$  corresponds to a set of subtrajectories that together cover $\mathcal{T}$. 
For every such subtrajectory $\mathcal{T}[a, b]$, both \texttt{StepSecond} and \texttt{Query} get invoked for each row $c \in [a, b]$. Since the tree has depth $\log n$, it follows that we invoke these functions $O(\log n)$ times per row $c$ of ${M}_\Delta(\mathcal{T}, \mathcal{T})$ (i.e. $O(n \log n)$ times in total). 
Naively, both the \texttt{StepSecond} and \texttt{Query} functions iterate over all columns in their given row $c$. 
However, just as in Section~\ref{sec:clustering}, we note that our functions skip over all $(j_2, c)$ for which $d(\mathcal{T}(j_2), \mathcal{T}(c)) > \Delta$. 
It follows that we may use our row generation trick to spend $O(Z_c \log^3 n)$ time instead ($Z_c$ denotes {the number of} zeroes in row $c$). 
Since each row $c$ gets processed $O(\log n)$ times, this upper bounds our running time by $O(z \log^4 n)$. 

For space, we note that invoking \texttt{StepSecond} and \texttt{Query} for a row $c$ need only rows $c$ and $c-1$ in memory. So the total memory usage is $O(n)$. 
Explicitly storing $S(\Delta, \mathcal{T})$ takes $O(n^2 \log n)$ space.
However, since we only iterate over $S(\Delta, \mathcal{T})$, the total space usage remains $O(n)$. 
The range tree therefore dominates the space with its $O(n \log n)$ space usage.
\end{proof}

\subsection{\texorpdfstring{Greedy clustering using \textbf{PSC($\Delta$, $\mathcal{T}$)}}{Greedy clustering using PSC(D, T)}}

We reconsider Algorithms~\ref{alg:k-centre} and \ref{alg:k-means} in Section~\ref{sec:clustering} for $k$-centre and $k$-means clustering. 
As written, these algorithms invoke a subroutine $f$ that returns a single \pathlet ($P$, ${\fancy{P}}$).
Then, the algorithm computes whether it wants to add ($P$, ${\fancy{P}}$) to the current clustering.
We can adapt these algorithms to this new function.
For $f$, we invoke \textbf{PSC}($\Delta$, $\mathcal{T}$). 
This function iterates over a set of \pathlets  ($P$, ${\fancy{P}}$).
For each \pathlet ($P$, ${\fancy{P}}$) that we encounter, we let either Algorithm~\ref{alg:k-centre} or \ref{alg:k-means} compute whether it wants to add ($P$, ${\fancy{P}}$) to the current clustering.

\section{Experiments}
\label{sec:experiments}

%In our experiments, we fix some data set and scoring parameters $(c_1, c_2, c_3)$. 
We propose three new greedy clustering algorithms for $k$-centre clustering:

\begin{enumerate}
    \item \texttt{SC-}$\ell$ runs Algorithm~\ref{alg:k-centre}, for exponentially scaling $\Delta$, with $f \gets $\textbf{SC($\max$, $\ell$, $\Delta$, $\mathcal{T}$)},
    \item \texttt{SC-}$m$ runs Algorithm~\ref{alg:k-centre}, for exponentially scaling $\Delta$, with $f \gets $ \textbf{SC($m$, $\max$, $\Delta$, $\mathcal{T}$)}, and
    \item \texttt{PSC} runs Algorithm~\ref{alg:k-centre}, for exponentially scaling $\Delta$,  with $f \gets $ $\textbf{PSC}(\Delta, \mathcal{T}$).
\end{enumerate}

\noindent
Our algorithms \texttt{SC-}$\ell$ and \texttt{SC}-$m$ need to fix a parameter $\ell$ and $m$ respectively. 
We choose $\ell \in \{ 4, 8, 16, 32, 64 \}$ and $m \in \{ 2, 4, 8, 16 \}$ and do a separate run for each choice. 
Thus, together with \texttt{PSC}, we run ten different algorithm configurations. 
We additionally compare to the subtrajectory clustering algorithm from~\cite{buchin2020improved}, which is oblivious to the scoring function.

\begin{itemize}
    \item \texttt{Map-construction} runs the algorithm in~\cite{buchin2020improved}. We terminate this map construction algorithm as soon as it has constructed a clustering of $\mathcal{T}$. 
\end{itemize}

\subparagraph{$k$-means clustering.}
For $k$-means clustering, we obtain three algorithms by running Algorithm~\ref{alg:k-means} instead. Note that our clustering algorithms invoked Algorithm~\ref{alg:k-centre} across several runs for exponentially scaling $\Delta$. Our clustering algorithms do not invoke Algorithm~\ref{alg:k-means} in this manner. Instead, Algorithm~\ref{alg:k-means} considers exponentially scaling $\Delta$'s internally.  Our empirical analysis will show that this change has significant ramifications for the clustering's performance on large input.

We again obtain ten algorithm configurations. In addition, we run the map construction algorithm from~\cite{buchin2020improved} and the $k$-means subtrajectory clustering algorithms from~\cite{agarwal2018subtrajectory}:

\begin{itemize}
    \item \texttt{Envelope} runs the algorithm from~\cite{agarwal2018subtrajectory} with the scoring parameters $(c_1, c_2, c_3)$. 
\end{itemize}

\subparagraph{Experimental setup.}
We design different experiments to measure performance and the scoring function.  For performance, we run fourteen experiments across the five real-world data sets.
Seven experiments for $k$-centre clustering, and seven for $k$-means clustering. 
Each experiment ran three times, using a different vector $(c_1, c_2, c_3)$. 
Each experiment runs $11$ programs for $k$-centre clustering, or $12$ for $k$-means clustering. If a program fails to terminate after 24 hours, we do not extract any output or memory usage. 

We restrict scoring measurements to the three smaller or subsampled real-world data sets. This is necessary to make \texttt{Map-construct} and \texttt{Envelope} terminate.  We add six synthetic data sets. We only compare under $k$-means clustering since only our algorithms can optimise for $k$-centre clustering.  
The experiments were conducted on a machine with a 4.2GHz AMD Ryzen 9 7950X3D and 128GB memory.
% The experiments were conducted on a machine with a 4.2GHz AMD Ryzen 7 7800X3D and 128GB memory.
Over 90 runs hit the timeout. 
Running the total set of experiments sequentially takes over 100 days on this machine. 

The full results per experiment are organised in tables in Appendix~\ref{app:raw}.
In this section, we restrict our attention to only three plots.

\subparagraph{Choosing $(c_1, c_2, c_3)$.}
Recall that $c_1$ weighs the number of clusters in the clustering and that $c_3$ weighs the fraction of uncovered points. 
We follow the precedent set in~\cite{agarwal2018subtrajectory} and choose $c_1 = 1$, and $c_3$ equal to the number of trajectories in $\mathcal{T}$. 
Finally, $c_2$ weighs the \frechet distance across clusters. 
For each data set, we first find a choice for $c_2$ such the algorithms lead to a non-trivial clustering. We then also run the algorithms using the scoring vectors $(c_1, 10 \cdot  c_2, c_3)$ and $(c_1, 0.1 \cdot  c_2, c_3)$, for a total of three scoring vectors per data set. 

\subparagraph{Comparing quality.}
To compare solution quality, we measure the scoring function and other qualities such as: the total number of clusters in the clustering, the maximum \frechet distance from a subtrajectory to its centre, and the average \frechet distance across clusters. Our competitor algorithms frequently time out on the real-world data sets.
Hence, the qualitative comparison also makes use of our synthetic data set. We use three different configurations of our synthetic set to create input of different size. We choose to compare only under $k$-means clustering for a more fair comparison to other algorithms. Indeed, the algorithm by~\cite{agarwal2018subtrajectory} was designed for the $k$-means metric (\cite{buchin2020improved} is scoring function oblivious). 
Consequently, comparing under $k$-centre would only give us an unfair advantage.

\subsection{Results for $k$-means clustering}
The bar plot in Figure~\ref{fig:bar_kmeans_score} visualizes how the algorithms score. 
Figures \ref{fig:bar_kmeans_time} and \ref{fig:bar_kmeans_memory} show, respectively, a logarithmic-scale plot of the time and space usage. 
These plots use the `medium' choice for $c_2$ for each data set. 
Note that our choice of $c_2$ is very similar across the data sets in the figure.
This is because \emph{Athens}, \emph{Berlin} and \emph{Chicago} come from the same source, and thereby have the same scaling.
We subsequently designed our synthetic data to match this scaling.
Plots for the other scoring vector choices are contained in Appendix~\ref{app:bar_plots}. 

\subparagraph{Clustering quality.}
Figure~\ref{fig:bar_kmeans_score} shows that \texttt{Map-construct} (which is oblivious to the scoring function) scores poorly on all inputs.  One reasons for this is that it always outputs a large number of \pathlets.  
The \texttt{Envelope} algorithm, which was specifically designed for this scoring function, almost always obtains the best score. 

For any data set, for any scoring vector, we always find a choice for $\ell$ (or $m$) where \texttt{SC}-$\ell$ and \texttt{SC}-$m$ obtain a comparable score to \texttt{Envelope}.
The algorithm \texttt{PSC} scores best out of all newly proposed clustering algorithms.
Across all data sets and scoring functions it frequently competes with, and sometimes even improves upon, the score of \texttt{Envelope}.

\begin{figure}[H]
    \hspace*{-1.45in}
    \includegraphics[width=1.5\textwidth]{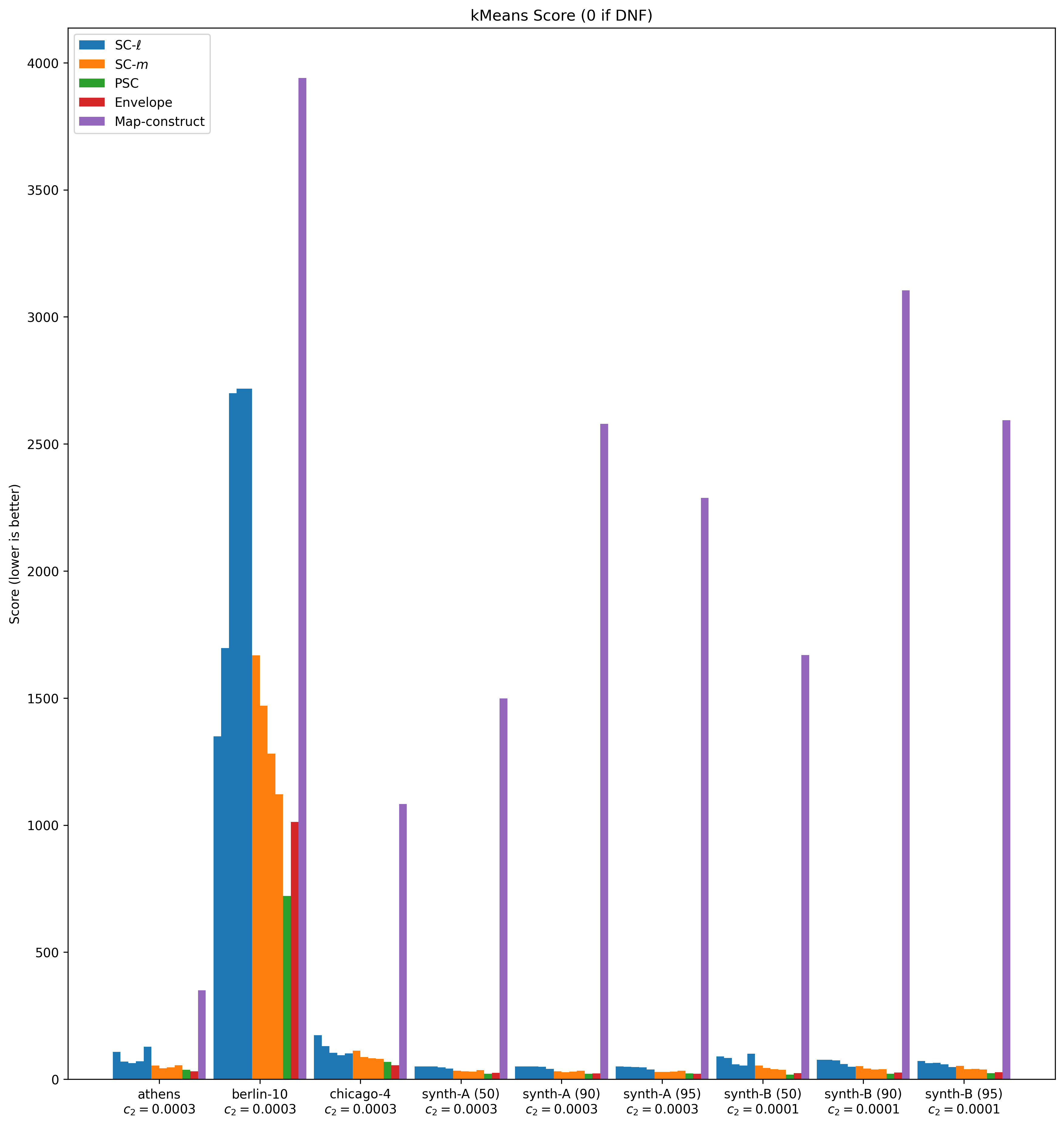}
    \caption{$\blacksquare$ $k$-means clustering $\blacksquare$ $k$-means score $\blacksquare$}
    \label{fig:bar_kmeans_score}
\end{figure}

\subparagraph{Other quality metrics. }
We briefly note that the full data in the appendix also considers other quality metrics: the number of \pathlets and the maximum and average \frechet distance within a \pathlet.
Perhaps surprisingly, \texttt{Map-construct} does not perform better on the latter two metrics even though it outputs many \pathlets.  
\texttt{Envelope} scores worse on these metrics across nearly all data sets and scoring vectors. However, it stays within a competitive range.

\subparagraph{Runtime comparison.}
Figure~\ref{fig:bar_kmeans_time} shows the runtime performance in logarithmic scale. 
On the \emph{Berlin} data set, no algorithm terminates in time.
On \emph{Drifter} and \emph{UniD}, only \texttt{PSC} terminates within the time limit. In the figure, we consider the synthetic data sets their place.

\texttt{Envelope} may produce high-quality clusterings, but takes significant time to do so. 
Despite using considerably less memory than \texttt{Map-construct}, it is consistently more than a factor two slower.
Our algorithms \texttt{SC}-$\ell$ and \texttt{SC}-$m$ can always find a choice of $\ell$ or $m$ such that previous algorithms are orders of magnitude slower. 
Our \texttt{PSC} algorithm is always orders of magnitude faster than the existing clustering algorithms. 
This is especially true on larger data sets, which illustrates that our algorithms have better scaling. 

When we compare our own algorithms, we note that we can frequently find a choice for $\ell$ such that \texttt{SC}-$\ell$ is significantly faster than \texttt{SC}-$m$ and \texttt{PSC}.
However, when considering the runtime  to quality ratio, \texttt{PSC} appears to be a clear winner.

\subparagraph{Memory consumption.}
Figure~\ref{fig:bar_kmeans_memory} shows memory performance on a logarithmic scale. The figure excludes \emph{Drifter}, \emph{UniD} and \emph{Berlin} for the same reason as when comparing running times. 
We record zero memory usage whenever an algorithm times out.

We observe that \texttt{Map-construct} uses an order of magnitude more memory than its competitors. This is because it stores up to $O(n^2)$ candidate \pathlets
in memory, together with the entire free-space matrix $M_\Delta(\mathcal{T}, \mathcal{T})$ for various choices of $\Delta$. \texttt{Envelope} also has a large memory footprint and uses at least a factor $100$ more memory on all of the inputs. 

When comparing our own implementations, we observe that \texttt{SC}-$\ell$ and \texttt{PSC} have a very small memory footprint.
These algorithms use $O(n \ell)$ and $O(n)$ space. Since we fix $\ell$ to be a small constant, both algorithms use near linear space.  
The space usage of \texttt{SC}-$m$ is $O(n \ell)$ where $\ell$ is the length of the longest subtrajectory $P$ such that there exists a $\Delta$-\pathlet $(P, {\fancy{P}})$ with $|{\fancy{P}}| \geq m$. 
The data shows that for many choices of $(m, \Delta)$, this $O(n \ell)$ space is considerable. Yet, it is still considerably lower than pre-existing algorithms.

\subsection{Results for $k$-centre clustering}
When comparing score, runtime, or memory usage under $k$-centre clustering, we reach the same conclusions. 
The only difference is that  the performance of \texttt{SC}-$m$ becomes more competitive to \texttt{SC}-$\ell$ and \texttt{PSC} on large data sets. 
This is because \textbf{SC}($m$,  $\max$, $\Delta$, $\mathcal{T})$ uses much time and space whenever both the data set and $\Delta$ are large. 
Our $k$-means clustering meta-algorithm considers a large $\Delta$ each round, and invokes \textbf{SC}($m$,  $\max$, $\Delta$, $\mathcal{T})$ every round.

Our $k$-centre clustering meta-algorithm has a separate run for each $\Delta$. If $\Delta$ is large, it terminates quickly because it adds very high-cardinality \pathlets to the clustering. 
This speeds up all our implementations, but \texttt{SC}-$m$ the most.

\section{Conclusion}

We proposed new subtrajectory clustering algorithms.
Our first two approaches use the algorithms \textbf{SC}($\max$, $\ell$, $\Delta$, $\mathcal{T}$) and \textbf{SC}($m$, $\max$, $\Delta$, $\mathcal{T}$) by BBGLL~\cite{buchin2011detecting}. 
We use these algorithms as a subroutine in a greedy clustering algorithm to cluster a set of trajectories $\mathcal{T}$.

Consider the Boolean function $\theta(\ell, m)$ that outputs \texttt{true} if there exists a $\Delta$-\pathlet $(P, \mathcal{P})$ of $\mathcal{T}$ with $|P| = \ell$ and $|\mathcal{P}| = m$. We observe that these functions always output a point on the Pareto front of this function. 
We show a new algorithm \textbf{PSC}($\Delta$, $\mathcal{T}$) that, instead of returning a single \pathlet on this Pareto front, iterates over a $2$-approximation of this Pareto front instead.
Intuitively, this function iterates over a broader set of candidate clusters that have comparable quality to any output of \textbf{SC}($\max$, $\ell$, $\Delta$, $\mathcal{T}$) and \textbf{SC}($m$, $\max$, $\Delta$, $\mathcal{T}$). 
We create a greedy clustering algorithm \texttt{PSC} using this algorithm as a subroutine. 

Our analysis shows that our new clustering algorithms significantly outperform previous algorithms in running time and space usage. 
We primarily use the scoring function from~\cite{agarwal2018subtrajectory} to measure clustering quality and observe that our algorithms give competitive scores. 
For other quality metrics, our algorithms often perform better than their competitors. 
We observe that \texttt{PSC} is the best algorithm when comparing the ratio between score and performance.

\newpage

\begin{figure}[H]
\hspace*{-1.5in}
    \includegraphics[width=1.5\textwidth]{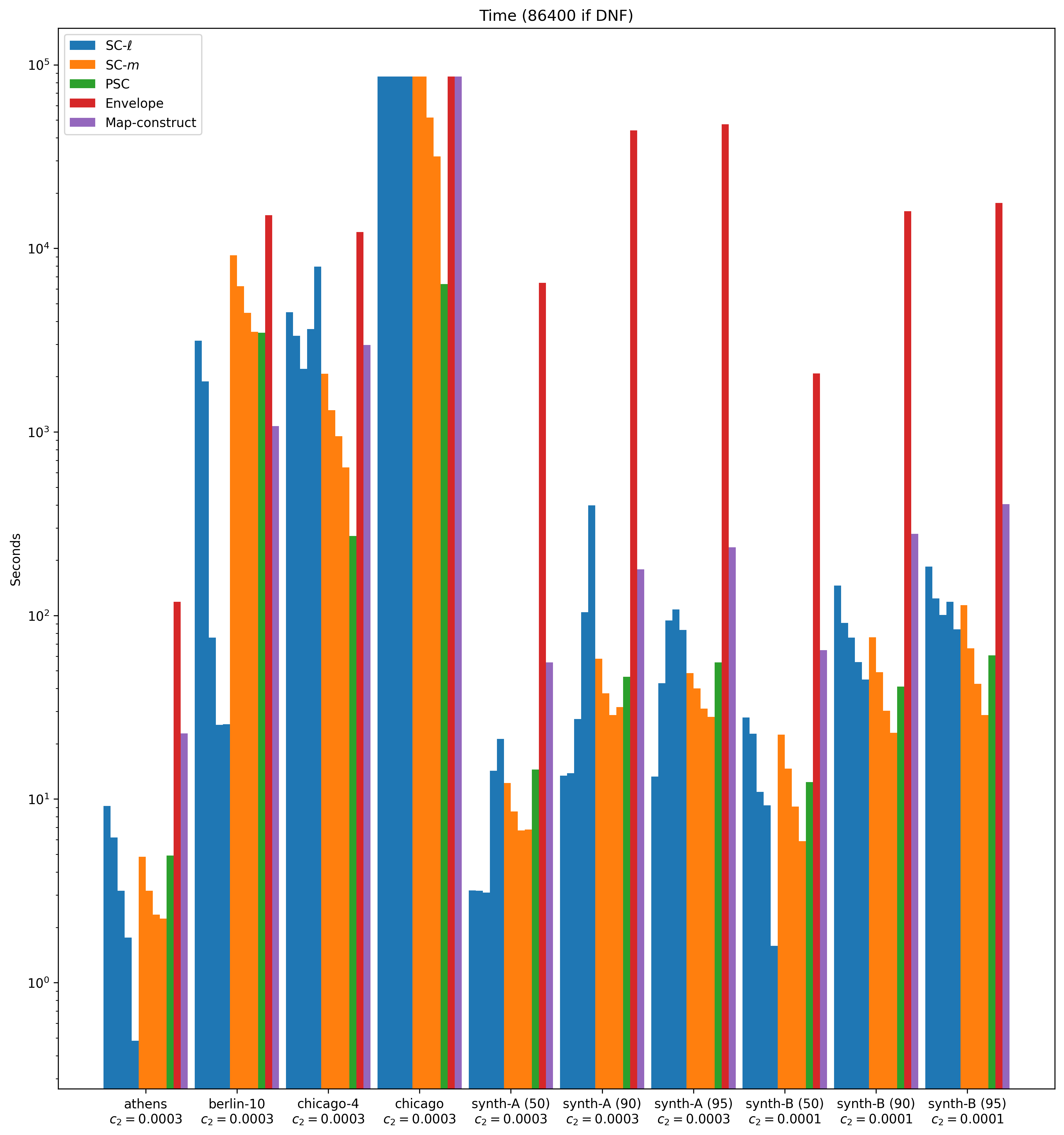}
    \caption{$\blacksquare$ $k$-means clustering $\blacksquare$ Running time  $\blacksquare$ Logarithmic scaling}
    \label{fig:bar_kmeans_time}
\end{figure}

\begin{figure}[H]
\hspace*{-1.45in}
    \includegraphics[width=1.5\textwidth]{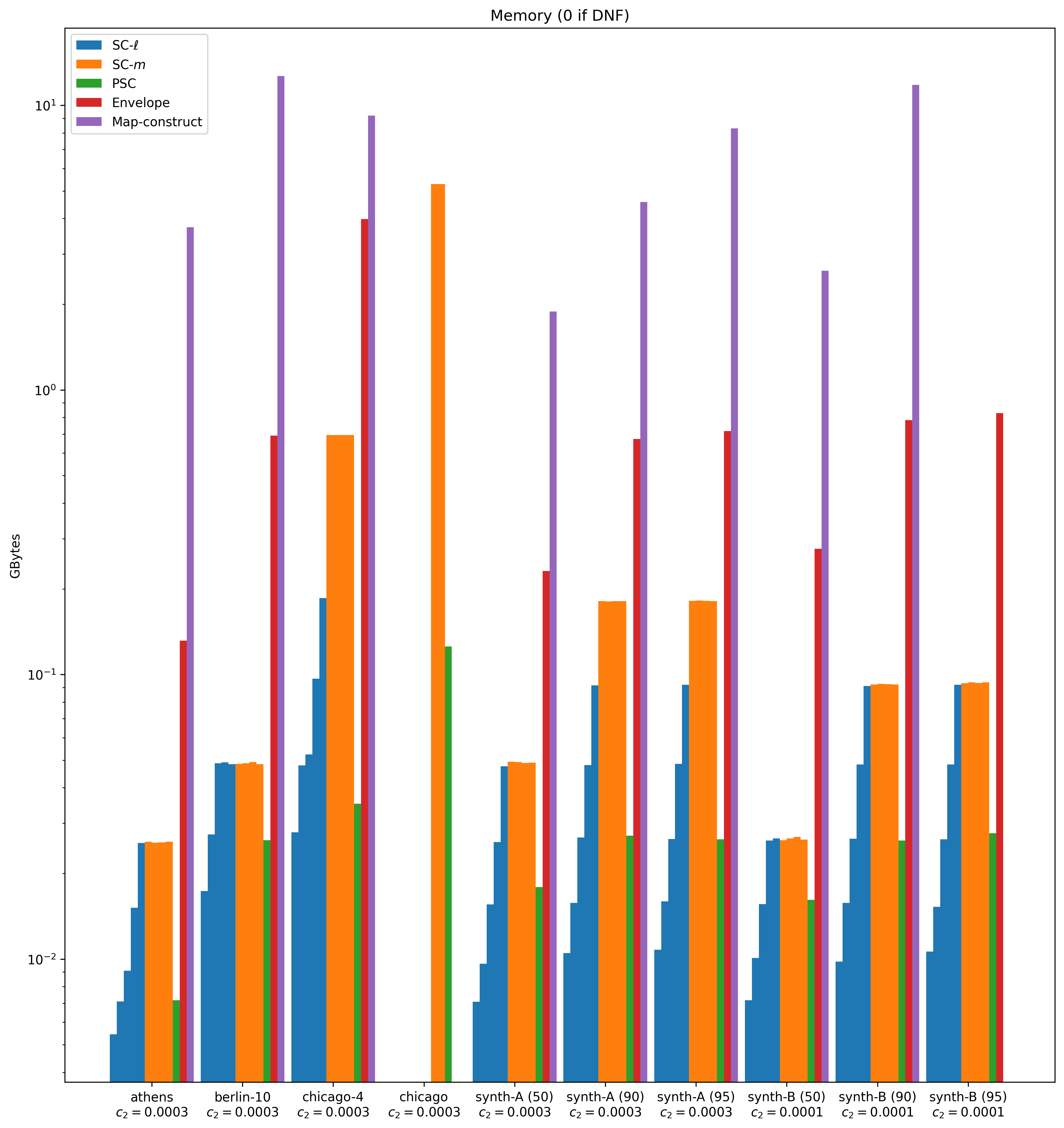}
    \caption{$\blacksquare$ $k$-means clustering $\blacksquare$ Memory usage  $\blacksquare$ Logarithmic scaling \\
    Whenever a run did not terminate, we display a memory usage of zero. }
     \label{fig:bar_kmeans_memory}
\end{figure}

\bibliography{refs.bib}

\appendix

\newpage
\section{\texorpdfstring{Introducing and solving SC($m$, $\ell$, $\Delta$, $\mathcal{T}$)}{Introducing and solving SC(m, l, D, T)} }
\label{sec:BBGLL}

BBGLL~\cite{buchin2011detecting} introduce the problem \textbf{SC($m$, $\ell$, $\Delta$, $\mathcal{T}$)}.
The input are integers $m$, $\ell$, $\Delta$ and a set of trajectories $\mathcal{T}$.  The goal is to find a $\Delta$-\pathlet $(P, {\fancy{P}})$ where $|{\fancy{P}}| \geq m$ and $|P| = \ell$.  
The problem statement may be slightly altered by maximizing $m$ or $\ell$ instead, giving the problems \textbf{SC($\max$, $\ell$, $\Delta$, $\mathcal{T}$)} and \textbf{SC($m$, $\max$, $\Delta$, $\mathcal{T}$)}.
They present a theoretical algorithm to solve $\textbf{SC($m$, $\ell$, $\Delta$, $\mathcal{T}$)}$ in $O(n m \ell)$ time and $O(n \ell)$ space.
This algorithm can be adapted to solve the other two problem variants too.

There exist two single-core implementations of this algorithm (the MOVETK library~\cite{movetk}, and the one by~\cite{buchin2017clustering, buchin2020improved}). 
\cite{movetk} uses $O(n^2)$ space and $O(n^2 + n m \ell)$ time.
We conjecture that \cite{buchin2017clustering, buchin2020improved} use $O(n \ell)$ space and $O(n^2 + n m \ell)$ time.
There also exists a GPU-based implementation of this algorithm~\cite{gudmundsson2012gpu}. We do not consider this approach since we have no way to fairly compare single- and multi-core performance.
We provide an optimised implementation of the BBGLL algorithm that significantly improves upon previous single-core implementations. We are unsure how many of our optimisations are included in~\cite{buchin2017clustering, buchin2020improved}. Theoretically, we slightly improve the asymptotic running time compared to~\cite{buchin2011detecting}, as we use $O( n m \ell + (n + z) \log^3 n)$ time instead of $O(n m \ell + n^2)$ time (where $z$ denotes the number of zeroes in $M_\Delta(\mathcal{T}, \mathcal{T})$).

\subparagraph{The existing implementations.}
We explain the implementation of \textbf{SC($m$, $\ell$, $\Delta$, $\mathcal{T}$)} by \cite{movetk}. 
As a preprocessing step, store the matrix $M_\Delta(\mathcal{T}, \mathcal{T})$ where each cell $(i, j)$ stores $\minrow(i, j)$ ~(Definition~\ref{def:minrow}).
After preprocessing, the algorithm has at all times in memory an input trajectory $T \in \mathcal{T}$, and two integers $\alpha \leq \beta \leq |T|$ where $\beta = \alpha  + \ell - 1$. We use the notation $P = T[\alpha, \beta] = \mathcal{T}[a, b]$. 
The algorithm alternates between three functions: \\

 \texttt{StepFirst}: increment $\alpha$ (and thus $a$).\\

\texttt{StepSecond}: if $\beta < |T|$, increment $\beta$ (and thus $b$).

Else let $T'$ succeed $T$ in $\mathcal{T}$, set $\alpha \gets 1$, $\beta \gets \ell$, $T \gets T'$, $a \gets b + 1$ and $b \gets a + \ell$. \\

 \texttt{Query}: given $P = \mathcal{T}[a, b]$, greedily compute an $m$-cardinality $\Delta$-\pathlet $(P, {\fancy{P}})$ as {described in Algorithm~\ref{alg:SCquery}}.

 \begin{algorithm}[h]
    \caption{\texttt{Query}($P = \mathcal{T}[a, b]$, $m$, $\ell$, $\Delta$, $\mathcal{T})$ by~\cite{buchin2011detecting}}
   \label{alg:SCquery}
    \begin{algorithmic}
    \State $(b, j_2) \gets (b, n)$ and ${\fancy{P}} \gets \emptyset$. 
    \While{$|{\fancy{P}}| < m$ and $j_2 > 0$} 
    \If{$\minrow(b, j_2) \leq a$}
    \State Find the maximum $j_1$ such that there exists a path from $(b, j_2)$ to $(a, j_1)$. 
    \State This takes $O(\ell)$ time by greedily stepping to the right-most neighbour 
    $(b', j')$ where,
    \State $\minrow(b', j') \leq a$.
    \State Add    
    $\mathcal{T}[j_1, j_2]$ to ${\fancy{P}}$ and set $(b, j_2) \gets (b, j_1 - 1)$.
    \Else 
    \State $(b, j_2) \gets (b, j_2 - 1)$. 
    \EndIf
    \EndWhile
    \If{$|{\fancy{P}}| \geq m$}
   \State  \Return $(P, {\fancy{P}})$
    \EndIf
    \end{algorithmic}
  \end{algorithm}

 \subparagraph{Algorithm analysis.}
 Computing $M_\Delta(\mathcal{T}, \mathcal{T})$ and the associated integers in each cell uses $O(n^2)$ space and can be computed in $O(n^2)$ time using a sweepline. 

  The \texttt{StepFirst} and \texttt{StepSecond} functions take constant time. 
 The if-statement can be true at most $m$ times and then takes  $O(\ell)$ time. The else statement may occur at most $O(n)$ times before we get a point that is out of bounds, therefore the total query time  is at most $O(n + m \ell)$. 
 BBGLL~\cite{buchin2011detecting} prove that one can solve SC($m$, $\ell$, $\Delta$, $\mathcal{T}$) by invoking \texttt{StepFirst}, \texttt{StepSecond} and \texttt{Query} $O(n)$ times for a total runtime of $O(n^2 + n m \ell)$. 
 
\subsection{Two performance improvement techniques}
\label{sub:improvement}

We discuss two techniques that improve the performance of our implementation. 

\subparagraph{Windowing. }
The implementation in~\cite{movetk} computes the matrix $M_\Delta(\mathcal{T}, \mathcal{T})$ in $O(n^2)$ time and space. 
The original publication BBGLL~\cite{buchin2011detecting} already describes how to reduce this space usage to $O(n \ell)$ by expanding the \texttt{StepFirst} and \texttt{StepSecond} functions, storing only the $\ell \times n$ submatrix/subgraph $M_\Delta(P = \mathcal{T}[a, b], \mathcal{T})$. 
We store only the $z' \in O(n \ell)$ non-isolated vertices in this subgraph. What remains is to dynamically maintain  $M_\Delta(P = \mathcal{T}[a, b], \mathcal{T})$: \\

\texttt{StepFirst:}
Before incrementing $a$, remove the top row from $M_\Delta(P, \mathcal{T})$ from memory. \\ 

\texttt{StepSecond:}
After incrementing $b$, we compute the new row in the graph $M_\Delta(P, \mathcal{T})$ in $O(n)$ time. 
Iterate over all $n$ columns from low to high.  For each point $(b, j)$ there are at most $3$ cells with an in-edge to $(b, j)$. Moreover, if $M_\Delta(b, j) = 0$ then $\minrow(b, j)$ is the minimum of  $\minrow(b', j')$ for $b' \in \{b, b-1 \}$ and $j' \in \{j, j-1 \}$, which we access in $O(1)$ time.

    \subparagraph{Row generation.}
    \texttt{StepSecond} and \texttt{Query} now share a bottleneck that we will call row generation. 
    \texttt{StepSecond} iterates over all columns in a new row. 
    \texttt{Query} (through the else-statement) iterates over $O(n)$ columns in the bottom-most row. Both subroutines can be adjusted to have an output-sensitive running time instead through the following structure:

\begin{itemize}
    \item    Construct a binary tree over $\mathcal{T}$ by recursively {splitting} the concatenated trajectory into two roughly equal-sized children. 
    Store for each node all vertices in the corresponding subtrajectory in a range tree~\cite{DBLP:books/lib/BergCKO08}. 
    \item     Given $(\Delta, T', q)$ with $\Delta \in \mathbb{R}$, $T'$ a subtrajectory corresponding to a node in our tree, and a query point $q$, the corresponding range tree can return the furthest point along $T'$ that is within distance $\Delta$ of $q$ in $O(\log^2 n)$ total time. 
\end{itemize}
 
\noindent     
We use this data structure to speed up both functions.  \\

    \texttt{StepSecond:} 
    For a given integer $b$, we need to compute for all $j \in [n]$ the value $\minrow(b, j)$. 
    Note that $\minrow(b, j) < b$ only if $M_\Delta(T, \mathcal{T})[b, j] = 0$ (i.e., $\mathcal{T}(j)$ is within distance $\Delta$ of $T(b)$).   
    We generate all such $(b, j)$ from high to low: 
    Let us have added $(b, j)$ to the output. 
    The trajectory $\mathcal{T}[1, j - 1]$ consists of $O(\log n)$ subtrees of our binary tree over $\mathcal{T}$. We query our each subtree with $q = T(b)$ get the furthest point along the subtree  within distance $\Delta$ of $T(b)$ in $O(\log^3 n)$ time. Thus, we add $k$ points to our graph using $O(k \log^3 n)$ total time.  \\

    \texttt{Query:} 
    If in the row generation we carefully record all $(b, j)$ with $\minrow(b, j) < a$, then we can avoid Case $2$. This improves the run time of one \texttt{Query} call to $O(m \ell)$. \\

\noindent   
    The \texttt{Query} function gets invoked $O(n)$ times. The \texttt{StepSecond} functions collectively use $O(n + z)$ range queries. 
    Overall, this reduces the total running time to $O( n m \ell + (n+z) \log^3 n)$.

    \subsection{Experimental evaluation}
    \label{sub:comparison}

Our experiments fix a data set $\mathcal{T}$ and some $\ell$ and $\Delta$. 
We then run \textbf{SC}($m$, $\max$, $\Delta$, $\mathcal{T}$).
Consequently, we have to fix one fewer parameter compared to running \textbf{SC}($m$, $\ell$, $\Delta$, $\mathcal{T}$). 
For each data set, we find some $(m_0, \Delta_0)$ that yields a reasonable output, then then consider $m \in \{m_0/2, m_0, 2 m_0\}$ and $\Delta \in \{\Delta_0/2, \Delta_0, 2 \Delta_0\}$.
This way, we cover a wide range of parameters.
There now exist four single-core implementations of the algorithm by BBGLL~\cite{buchin2011detecting}:

\begin{enumerate}
    \item \texttt{SC-$m$} refers to the C++ implementation found in this paper. 
    \item \texttt{Map-construct}~(\cite{buchin2017clustering}) refers to the Java implementation in~\cite{buchin2017clustering},
    \item \texttt{Map-construct-rtree} refers to the Java implementation in~\cite{buchin2020improved}, and
    \item \texttt{Movetk}~\cite{movetk} refers to the C++ implementation in \cite{movetk}. 
\end{enumerate}

\subparagraph{Data sets.}
We compare these algorithms using the \emph{Athens-small}, \emph{Chicago-4}, \emph{Berlin-10} and \emph{Synthetic} data set.
All other data sets are so large that Movetk runs out of memory before terminating. 
The raw data is given in three tables that follow this section.
We discuss the results of these experiments through two bar plots that we show in the main body.

    \begin{figure} \begin{center}
    \begin{minipage}{0.5\textwidth}
    \begin{tabular}{@{}lrrrr@{}}        
        \toprule
        Name & Seconds & GBytes & $\Delta$ & $m$\\
        \midrule
        MoveTK & 1.00 & 3.21 & 50 & 20\\
Map-construct & 0.87 & 0.62 & 50 & 20\\
Map-construct-rtree & 0.54 & 0.61 & 50 & 20\\
        \medskip
SC-$m$ & 0.01 & 0.00 & 50 & 20\\
        MoveTK & 1.00 & 3.21 & 50 & 40\\
Map-construct & 0.78 & 0.61 & 50 & 40\\
Map-construct-rtree & 0.52 & 0.61 & 50 & 40\\
        \medskip
SC-$m$ & 0.01 & 0.00 & 50 & 40\\
        MoveTK & 1.00 & 3.21 & 50 & 80\\
Map-construct & 0.76 & 0.61 & 50 & 80\\
Map-construct-rtree & 0.52 & 0.60 & 50 & 80\\
        \medskip
SC-$m$ & 0.01 & 0.00 & 50 & 80\\
        MoveTK & 1.00 & 3.34 & 100 & 20\\
Map-construct & 0.92 & 0.62 & 100 & 20\\
Map-construct-rtree & 0.66 & 0.62 & 100 & 20\\
        \medskip
SC-$m$ & 0.01 & 0.00 & 100 & 20\\
        MoveTK & 1.00 & 3.34 & 100 & 40\\
Map-construct & 0.89 & 0.61 & 100 & 40\\
Map-construct-rtree & 0.65 & 0.63 & 100 & 40\\
        \medskip
SC-$m$ & 0.01 & 0.00 & 100 & 40\\
        MoveTK & 1.00 & 3.34 & 100 & 80\\
Map-construct & 0.85 & 0.63 & 100 & 80\\
Map-construct-rtree & 0.66 & 0.62 & 100 & 80\\
        \medskip
SC-$m$ & 0.02 & 0.00 & 100 & 80\\
        MoveTK & 1.00 & 3.59 & 200 & 20\\
Map-construct & 1.22 & 0.62 & 200 & 20\\
Map-construct-rtree & 1.00 & 0.66 & 200 & 20\\
        \medskip
SC-$m$ & 0.02 & 0.00 & 200 & 20\\
        MoveTK & 1.00 & 3.59 & 200 & 40\\
Map-construct & 1.12 & 0.62 & 200 & 40\\
Map-construct-rtree & 0.99 & 0.66 & 200 & 40\\
        \medskip
SC-$m$ & 0.02 & 0.00 & 200 & 40\\
        MoveTK & 1.00 & 3.59 & 200 & 80\\
Map-construct & 0.97 & 0.60 & 200 & 80\\
Map-construct-rtree & 0.98 & 0.63 & 200 & 80\\
        \medskip
SC-$m$ & 0.03 & 0.00 & 200 & 80\\
        \bottomrule
    \end{tabular}
\end{minipage}

        \caption{$\blacksquare$ \textbf{SC}($m$, $\max$, $\Delta$, $\mathcal{T}$) $\blacksquare$ Athens-small $\blacksquare$}

    \end{center} \end{figure}

\newpage

    \begin{figure} \begin{center}
    \begin{minipage}{0.5\textwidth}
    \begin{tabular}{@{}lrrrr@{}}
        \toprule
        Name & Seconds & GBytes & $\Delta$ & $m$\\
        \midrule
        MoveTK & --- & --- & 50 & 10\\
Map-construct & 27.18 & 0.65 & 50 & 10\\
Map-construct-rtree & 2.79 & 1.34 & 50 & 10\\
        \medskip
SC-$m$ & 0.05 & 0.00 & 50 & 10\\
        MoveTK & --- & --- & 50 & 20\\
Map-construct & 24.69 & 0.62 & 50 & 20\\
Map-construct-rtree & 2.75 & 1.32 & 50 & 20\\
        \medskip
SC-$m$ & 0.05 & 0.00 & 50 & 20\\
        MoveTK & --- & --- & 50 & 40\\
Map-construct & 22.15 & 0.63 & 50 & 40\\
Map-construct-rtree & 2.63 & 1.22 & 50 & 40\\
        \medskip
SC-$m$ & 0.06 & 0.00 & 50 & 40\\
        MoveTK & --- & --- & 100 & 10\\
Map-construct & 27.12 & 0.66 & 100 & 10\\
Map-construct-rtree & 4.03 & 2.32 & 100 & 10\\
        \medskip
SC-$m$ & 0.08 & 0.00 & 100 & 10\\
        MoveTK & --- & --- & 100 & 20\\
Map-construct & 27.89 & 0.65 & 100 & 20\\
Map-construct-rtree & 4.20 & 2.30 & 100 & 20\\
        \medskip
SC-$m$ & 0.09 & 0.00 & 100 & 20\\
        MoveTK & --- & --- & 100 & 40\\
Map-construct & 25.81 & 0.66 & 100 & 40\\
Map-construct-rtree & 4.22 & 2.27 & 100 & 40\\
        \medskip
SC-$m$ & 0.11 & 0.00 & 100 & 40\\
        MoveTK & --- & --- & 200 & 10\\
Map-construct & 33.43 & 0.71 & 200 & 10\\
Map-construct-rtree & 8.19 & 4.51 & 200 & 10\\
        \medskip
SC-$m$ & 0.18 & 0.01 & 200 & 10\\
        MoveTK & --- & --- & 200 & 20\\
Map-construct & 31.32 & 1.16 & 200 & 20\\
Map-construct-rtree & 7.76 & 4.50 & 200 & 20\\
        \medskip
SC-$m$ & 0.19 & 0.00 & 200 & 20\\
        MoveTK & --- & --- & 200 & 40\\
Map-construct & 30.23 & 0.70 & 200 & 40\\
Map-construct-rtree & 7.97 & 4.63 & 200 & 40\\
        \medskip
SC-$m$ & 0.21 & 0.01 & 200 & 40\\
        \bottomrule
    \end{tabular}
\end{minipage}

        \caption{$\blacksquare$ \textbf{SC}($m$, $\max$, $\Delta$, $\mathcal{T}$) $\blacksquare$ Berlin-10 $\blacksquare$\\
        \texttt{MoveTK} exceeds the 128GB memory limit in all nine runs.}

    \end{center} \end{figure}

\newpage

    \begin{figure} \begin{center}
    \begin{minipage}{0.5\textwidth}
    \begin{tabular}{@{}lrrrr@{}}
        \toprule
        Name & Seconds & GBytes & $\Delta$ & $m$\\
        \midrule
        MoveTK & --- & --- & 50 & 40\\
Map-construct & 64.03 & 1.17 & 50 & 40\\
Map-construct-rtree & 25.41 & 4.45 & 50 & 40\\
        \medskip
SC-$m$ & 0.48 & 0.01 & 50 & 40\\
        MoveTK & --- & --- & 50 & 80\\
Map-construct & 54.49 & 0.80 & 50 & 80\\
Map-construct-rtree & 24.37 & 2.27 & 50 & 80\\
        \medskip
SC-$m$ & 0.48 & 0.01 & 50 & 80\\
        MoveTK & --- & --- & 50 & 160\\
Map-construct & 42.75 & 0.73 & 50 & 160\\
Map-construct-rtree & 24.83 & 2.23 & 50 & 160\\
        \medskip
SC-$m$ & 0.58 & 0.01 & 50 & 160\\
        MoveTK & --- & --- & 100 & 40\\
Map-construct & 73.22 & 2.17 & 100 & 40\\
Map-construct-rtree & 38.80 & 9.12 & 100 & 40\\
        \medskip
SC-$m$ & 0.82 & 0.01 & 100 & 40\\
        MoveTK & --- & --- & 100 & 80\\
Map-construct & 59.63 & 2.24 & 100 & 80\\
Map-construct-rtree & 39.01 & 8.66 & 100 & 80\\
        \medskip
SC-$m$ & 0.92 & 0.01 & 100 & 80\\
        MoveTK & --- & --- & 100 & 160\\
Map-construct & 51.59 & 0.72 & 100 & 160\\
Map-construct-rtree & 38.71 & 8.44 & 100 & 160\\
        \medskip
SC-$m$ & 1.06 & 0.01 & 100 & 160\\
        MoveTK & --- & --- & 200 & 40\\
Map-construct & 87.29 & 4.52 & 200 & 40\\
Map-construct-rtree & 68.88 & 10.87 & 200 & 40\\
        \medskip
SC-$m$ & 1.80 & 0.02 & 200 & 40\\
        MoveTK & --- & --- & 200 & 80\\
Map-construct & 80.58 & 4.25 & 200 & 80\\
Map-construct-rtree & 70.14 & 10.64 & 200 & 80\\
        \medskip
SC-$m$ & 2.11 & 0.02 & 200 & 80\\
        MoveTK & --- & --- & 200 & 160\\
Map-construct & 59.23 & 2.17 & 200 & 160\\
Map-construct-rtree & 70.61 & 10.53 & 200 & 160\\
        \medskip
SC-$m$ & 2.53 & 0.02 & 200 & 160\\
        \bottomrule
    \end{tabular}
\end{minipage}

        \caption{$\blacksquare$ \textbf{SC}($m$, $\max$, $\Delta$, $\mathcal{T}$) $\blacksquare$ Chicago-4 $\blacksquare$\\
        \texttt{MoveTK} exceeds the 128GB memory limit in all nine runs.}

    \end{center} \end{figure}

\newpage 
 
\section{Bar plots for more variables}
\label{app:bar_plots}

For each data set we run all algorithms using three choices for the scoring vector $(c_1, c_2, c_3)$. In particular, we find for each data set some suitable triple $(c_1, c_2, c_3)$.
Then, we also consider the vectors $(c_1, 0.1 c_2, c_3)$ and $(c_1, 10 c_2, c_3)$. We call these three vectors the `small', `medium' and `large' scoring vectors. 
 Note that the choice of scoring vector matters even when we only compare algorithmic performance.
This is because all involved algorithms make different algorithmic choices depending on the scoring parameters that they receive. 
 \newpage 

\begin{figure}[H]
    \begin{adjustwidth}{-0.25\textwidth}{}
    \includegraphics[width=1.5\textwidth]{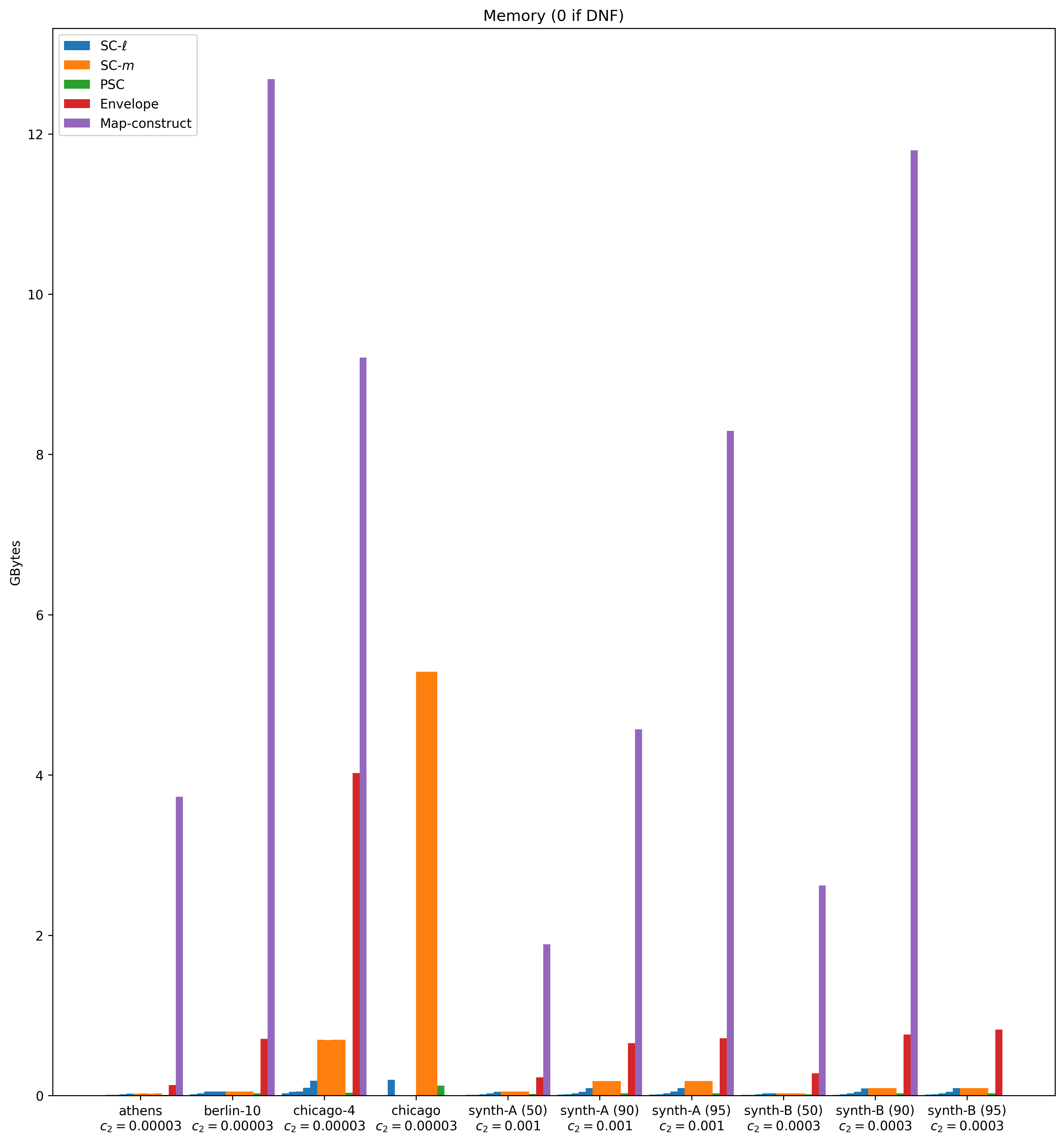}
    \end{adjustwidth}
    \caption{$\bullet$ $k$-means clustering $\bullet$ Memory usage  $\bullet$ Small $c_2$ $\bullet$}
\end{figure}

\begin{figure}[H]
    \begin{adjustwidth}{-0.25\textwidth}{}
    \includegraphics[width=1.5\textwidth]{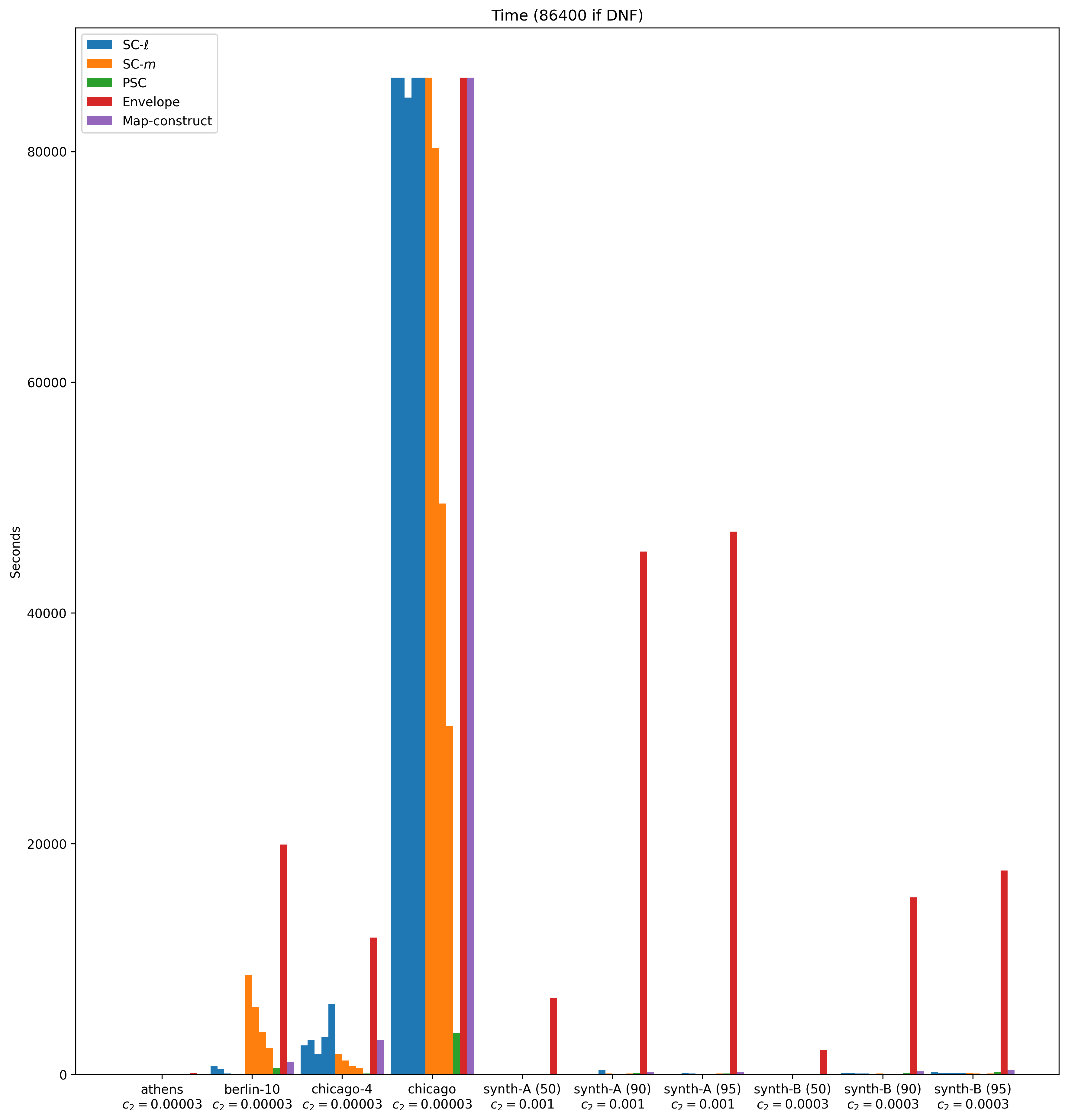}
    \end{adjustwidth}
    \caption{$\bullet$ $k$-means clustering $\bullet$ Running time  $\bullet$ Small $c_2$ $\bullet$}
\end{figure}

\begin{figure}[H]
    \begin{adjustwidth}{-0.25\textwidth}{}
    \includegraphics[width=1.5\textwidth]{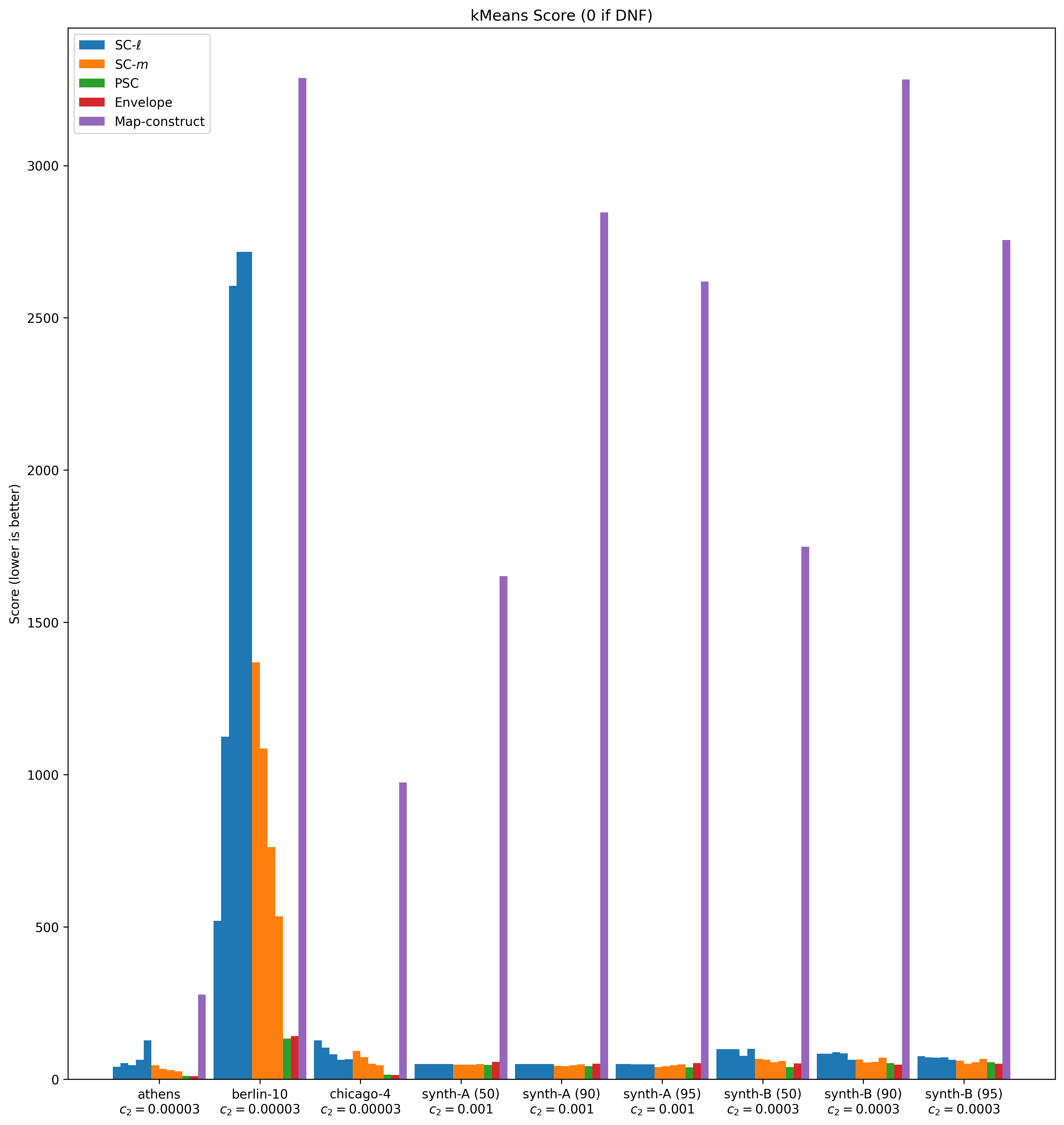}
    \end{adjustwidth}
    \caption{$\bullet$ $k$-means clustering $\bullet$ $k$-means score $\bullet$ Small $c_2$ $\bullet$}
\end{figure}

\begin{figure}[H]
    \begin{adjustwidth}{-0.25\textwidth}{}
    \includegraphics[width=1.5\textwidth]{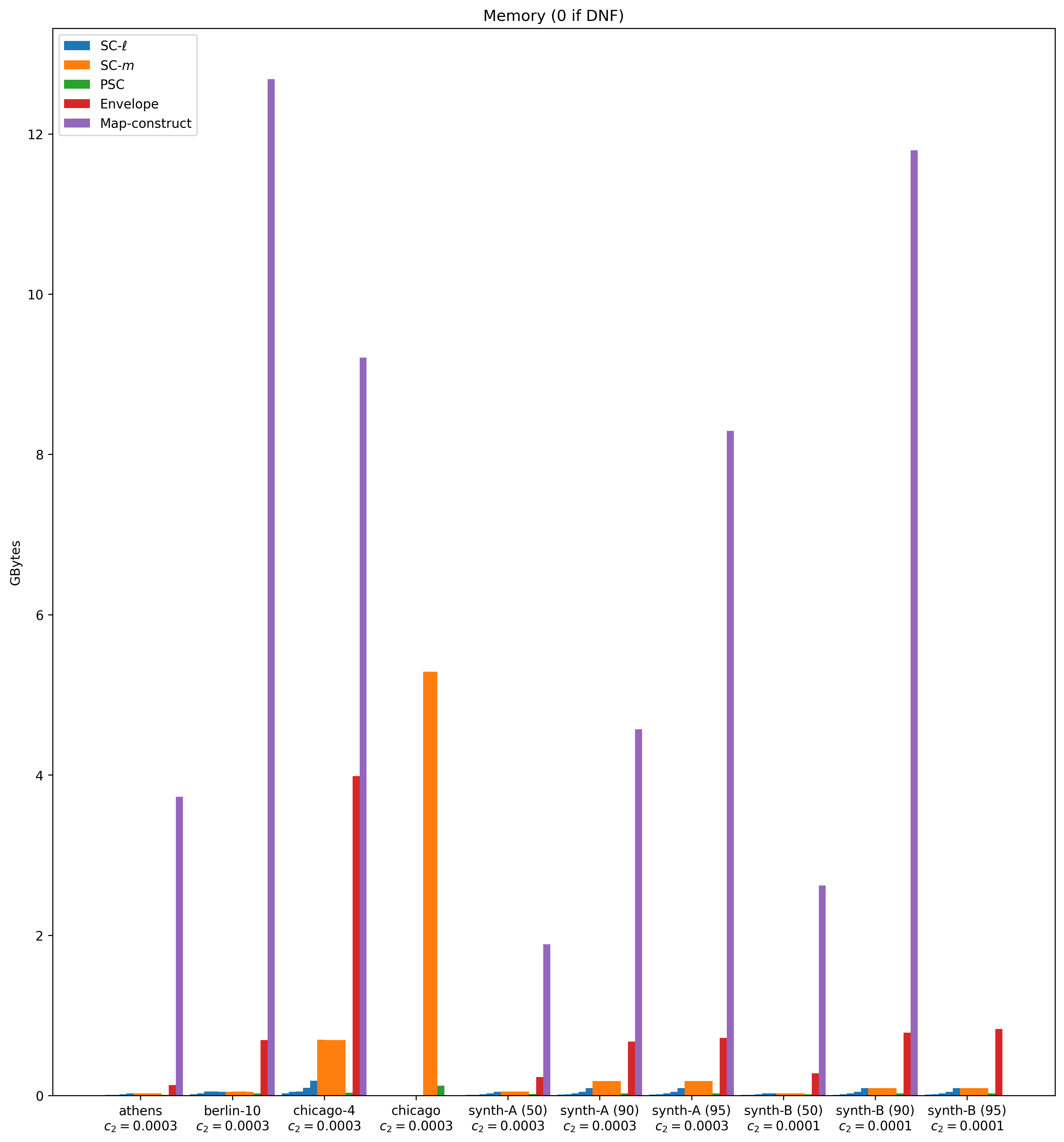}
    \end{adjustwidth}
    \caption{$\bullet$ $k$-means clustering $\bullet$ Memory usage  $\bullet$ Medium $c_2$ $\bullet$}
\end{figure}

\begin{figure}[H]
    \begin{adjustwidth}{-0.25\textwidth}{}
    \includegraphics[width=1.5\textwidth]{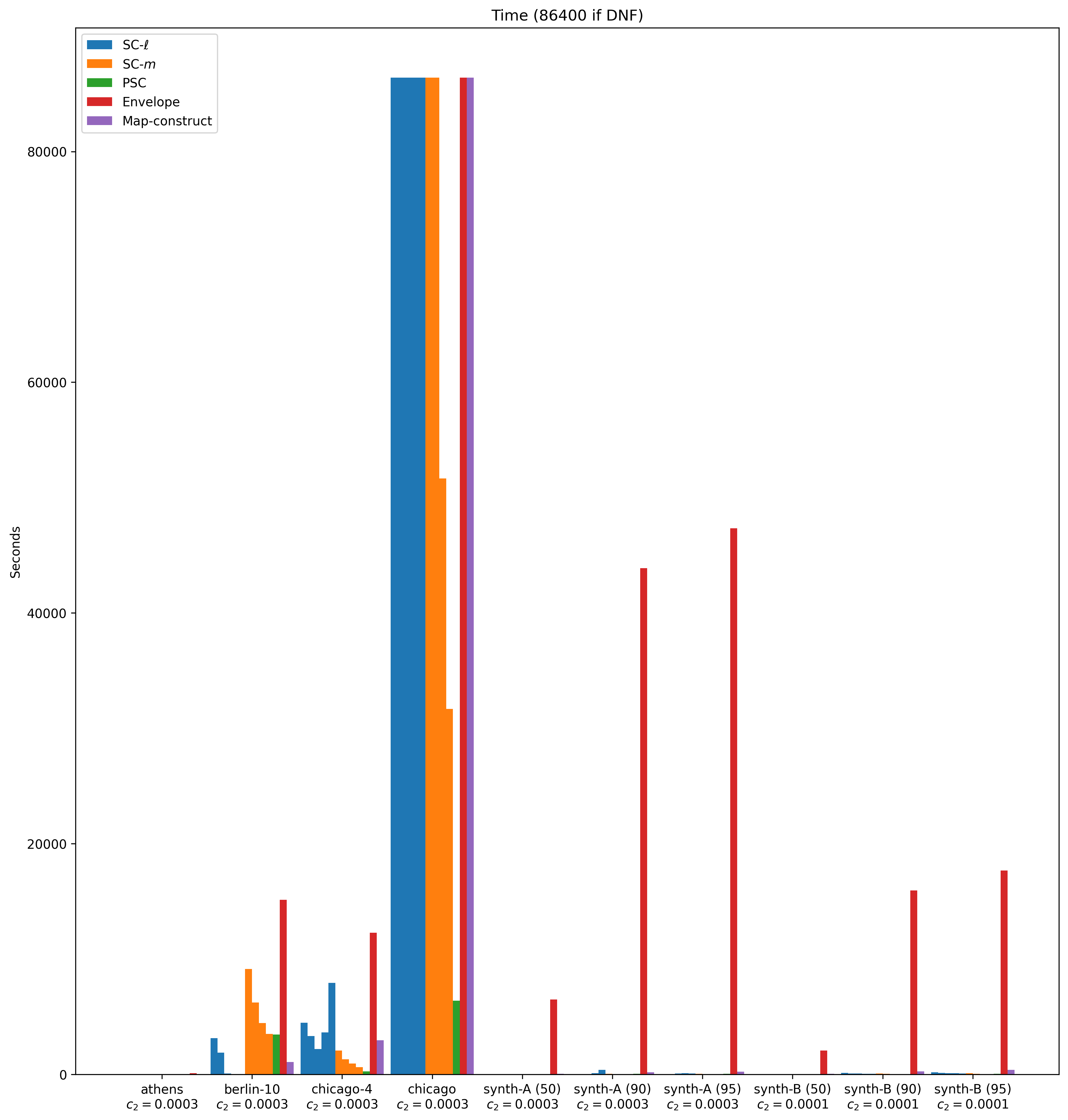}
    \end{adjustwidth}
    \caption{$\bullet$ $k$-means clustering $\bullet$ Running time  $\bullet$ Medium $c_2$ $\bullet$ }
\end{figure}

\begin{figure}[H]
    \begin{adjustwidth}{-0.25\textwidth}{}
    \includegraphics[width=1.5\textwidth]{plots/socg_bar_score_1_score.png}
    \end{adjustwidth}
    \caption{$\bullet$ $k$-means clustering $\bullet$ $k$-means score $\bullet$ Medium $c_2$ $\bullet$}
\end{figure}

\begin{figure}[H]
    \begin{adjustwidth}{-0.25\textwidth}{}
    \includegraphics[width=1.5\textwidth]{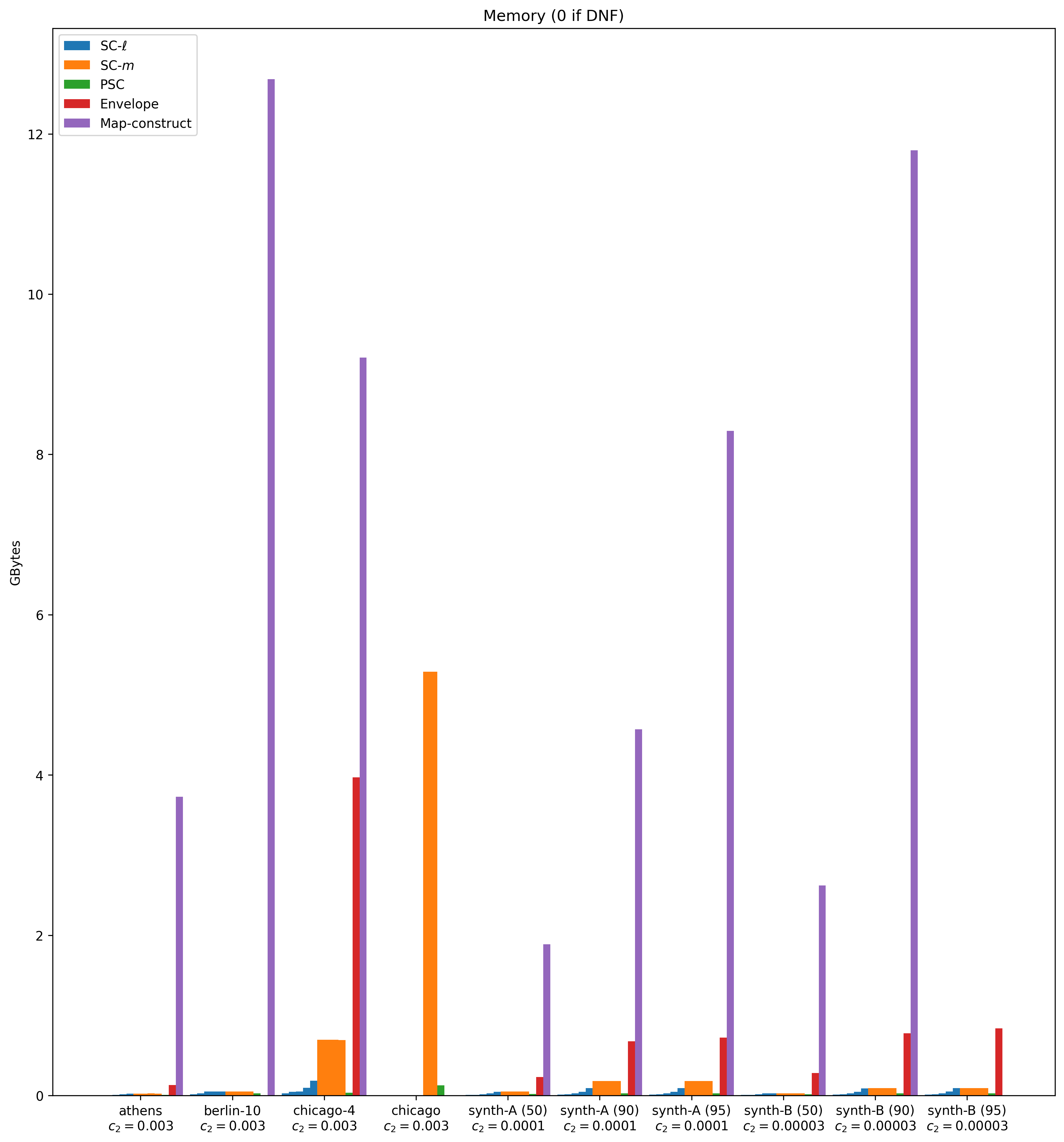}
    \end{adjustwidth}
    \caption{$\bullet$ $k$-means clustering $\bullet$ Memory usage  $\bullet$ Large $c_2$ $\bullet$}
\end{figure}

\begin{figure}[H]
    \begin{adjustwidth}{-0.25\textwidth}{}
    \includegraphics[width=1.5\textwidth]{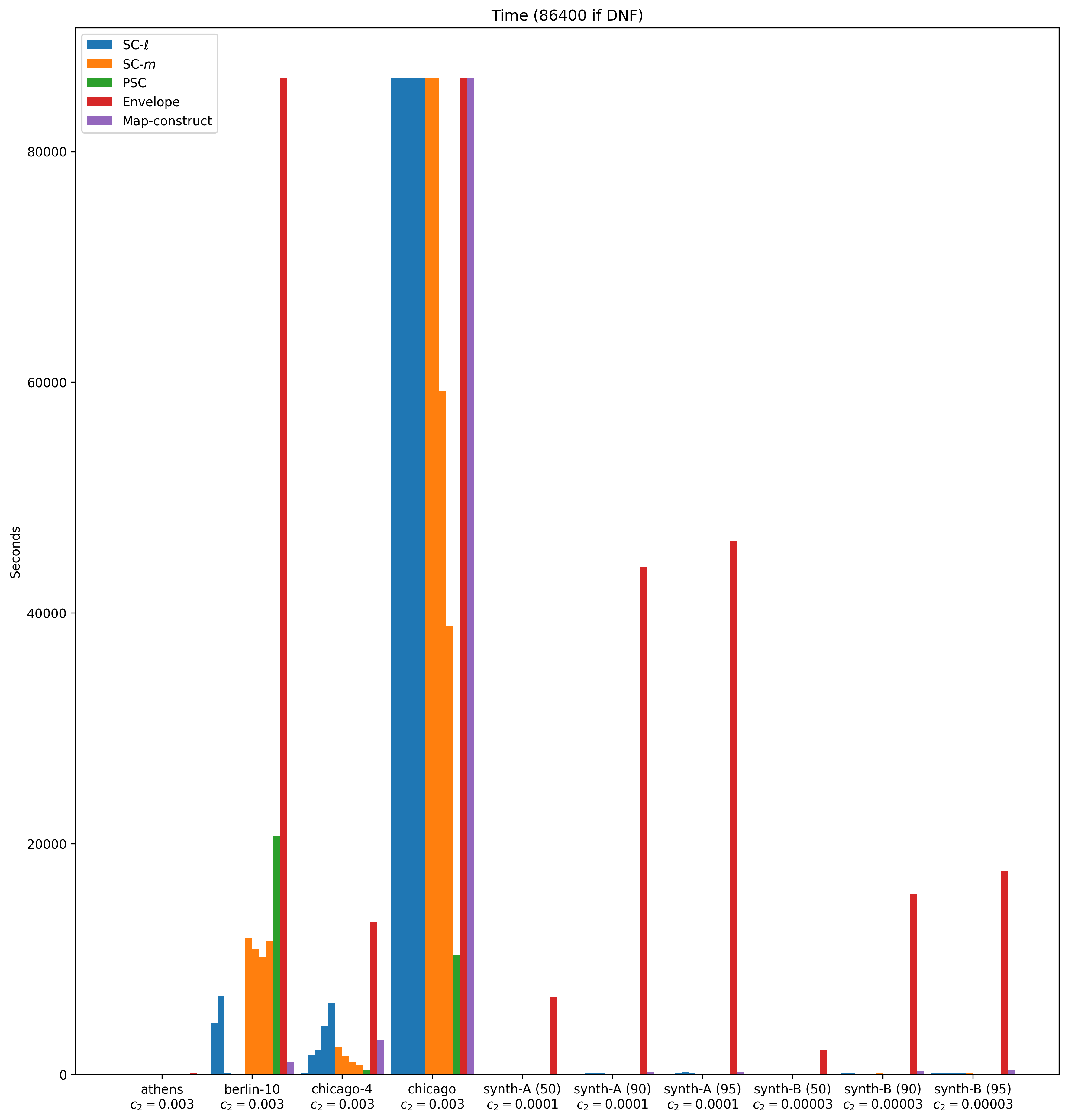}
    \end{adjustwidth}
    \caption{$\bullet$ $k$-means clustering $\bullet$ Running time  $\bullet$ Large $c_2$ $\bullet$}
\end{figure}

\begin{figure}[H]
    \begin{adjustwidth}{-0.25\textwidth}{}
    \includegraphics[width=1.5\textwidth]{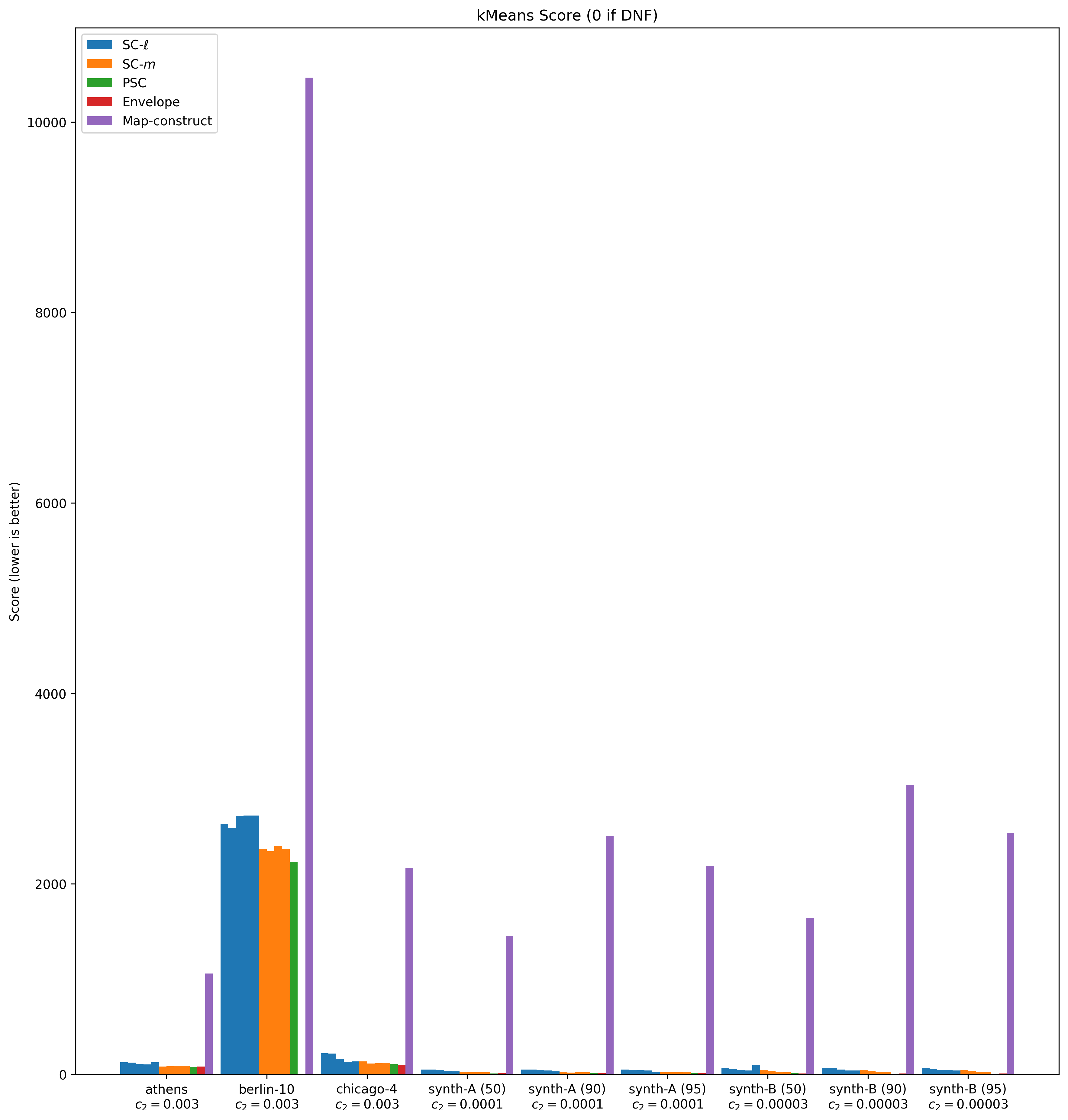}
    \end{adjustwidth}
    \caption{$\bullet$ $k$-means clustering $\bullet$ $k$-means score $\bullet$ Large $c_2$ $\bullet$}
\end{figure}

\section{Raw data}
\label{app:raw}

In the following tables, we compare our implementations \texttt{SC-}$\ast$ and \texttt{PSC}, to the implementations presented in the literature. Specifically the \emph{Envelope} implementation from \cite{agarwal2018subtrajectory} and the \emph{Map-construct} implementation from \cite{buchin2020improved}.

Each table specifies whether we do $k$-means clustering or $k$-centre clustering.
It then specifies the data set.
We list time and space consumption, and, for each clustering, its score using the scoring vector.
We also report additional data the reader may find interesting such as the number of clusters, or the maximum \frechet distance between a subtrajectory in a cluster and its centre.

We run our experiments on various data sets, including very large data. Consequently, some had timed out by our 24 hour deadline; for timed-out instances, we mark their row by $-$. In a few of the examples, the output is the empty clustering (and thus not all measures are well-defined); these can be recognised by the number of clusters being marked as $0$. 
For these rows, other metrics may display exceptional values. E.g., then the maximum \frechet distance is $-\infty$.

\begin{figure} \begin{center}
        \begin{minipage}{\textwidth}
    \adjustbox{max width=\textwidth}
    {
    % [inline block 0: 53 envs, 74925 chars -> data_tex | \begin{tabular}{@{}lrrcrrrrrr@{}}         \toprule...]

    }
  \end{minipage}
  \end{center}
        \caption{
        $\blacksquare$ $k$-means clustering $\blacksquare$ Synthetic $\blacksquare$ 100 initial, 100 trajectories, 95\% sampling $\blacksquare$}

    \end{figure}

\newpage

\end{document}